\def\maxwidth{ %
  \ifdim\Gin@nat@width>\linewidth
    \linewidth
  \else
    \Gin@nat@width
  \fi
}
\definecolor{fgcolor}{rgb}{0.345, 0.345, 0.345}
\definecolor{shadecolor}{rgb}{.97, .97, .97}
\definecolor{messagecolor}{rgb}{0, 0, 0}
\definecolor{warningcolor}{rgb}{1, 0, 1}
\definecolor{errorcolor}{rgb}{1, 0, 0}
\newenvironment{knitrout}{}{} % an empty environment to be redefined in TeX
\definecolor{shadecolor}{rgb}{0.95, 0.95, 0.95}
\newcommand{\fs}[1]{{\it\textcolor{blue}{#1}}}
\newcommand{\code}[1]{\texttt{\small{#1}}}
\newcommand{\pkg}[1]{\textsf{#1}}
\newcommand{\bvec}{\left[\begin{array}{c}}
\newcommand{\evec}{\end{array}\right]}
\newcommand{\bmat}[1]{\left[\begin{array}{*{#1}{c}}}
\newcommand{\emat}{\end{array}\right]}
\newcommand{\dimm}[2]{\underset{\phantom{.}^{#2}}{#1}}
\newcommand{\inds}[2]{_{\genfrac{}{}{0pt}{}{#1}{#2}} }
\newtheorem{proposition}{Proposition}[section]
\newcommand{\ra}{\rightarrow}
\newcommand{\Ra}{\Rightarrow}
\newcommand{\tr}{^\top}
\newcommand{\eps}{\varepsilon}
\newcommand{\real}{\mathds{R}}
\newcommand{\1}{^{-1}}
\newcommand{\iid} {\operatorname{i.i.d.}}
\newcommand{\ovec}{\operatorname{vec}}
\newcommand{\EV}{\operatorname{E}}
\newcommand{\diag} {\operatorname{diag}}
\newcommand{\snr}{\operatorname{SNR}}
\newcommand{\sd}{\operatorname{sd}}
\newcommand{\kernel}{\operatorname{ke}}
\newcommand{\rg}{\operatorname{rank}}
\newcommand{\image}{\operatorname{im}}
\newcommand{\mA}{\bm{A}}
\newcommand{\mB}{\bm{B}}
\newcommand{\mD}{\bm{D}}
\newcommand{\mH}{\bm{H}}
\newcommand{\mI}{\bm{I}}
\newcommand{\mW}{\bm{W}}
\newcommand{\mP}{\bm{P}}
\newcommand{\mV}{\bm{V}}
\newcommand{\mU}{\bm{U}}
\newcommand{\mX}{\bm{X}}
\newcommand{\mY}{\bm{Y}}
\newcommand{\mZ}{\bm{Z}}
\newcommand{\mb}{\bm{b}}
\newcommand{\mf}{\bm{f}}
\newcommand{\ms}{\bm{s}}
\newcommand{\mt}{\bm{t}}
\newcommand{\mv}{\bm{v}}
\newcommand{\mw}{\bm{w}}
\newcommand{\mSigma}{\bm{\Sigma}}
\newcommand{\btheta}{\bm{\theta}}
\newcommand{\bP}{\bm{P}}
\newcommand{\bD}{\bm{D}}
\newcommand{\bU}{\bm{U}}
\newcommand{\bV}{\bm{V}}
\newcommand{\by}{\bm{y}}
\newcommand{\bSigma}{\bm{\Sigma}}
\newcommand{\bB}{\bm{B}}
\newcommand{\bW}{\bm{W}}
\newcommand{\bX}{\bm{X}}
\newcommand{\meps}{\bm{\eps}}
\newcommand{\mGamma}{\bm{\Gamma}}
\newcommand{\mtheta}{\bm{\theta}}
\newcommand{\mTheta}{\bm{\Theta}}
\newcommand{\mPhi}{\bm{\Phi}}
\newcommand{\mXi}{\bm{\Xi}}
\newcommand{\bea}{\begin{eqnarray}}
\newcommand{\eea}{\end{eqnarray}}
\newcommand{\nn}{\nonumber}
\title{Identifiability in penalized \\ function-on-function regression models}
\author{Fabian Scheipl \& Sonja Greven \\ 
{\small LMU M{\"u}nchen}}
\begin{document}
%\linenumbers

%%%%%%%%%%%%%%%%%%%%%%%%%%%%%%%%%%%%%%%%%%%%%%%%%%%%%%%%%%%%%
%% - ersetze /fig in sim6.tex und speichere als sim6-arxiv.tex
%% - \input{ident3.bbl} statt \bibliography{ident}
%%%%%%%%%%%%%%%%%%%%%%%%%%%%%%%%%%%%%%%%%%%%%%%%%%%%%%%%%%%%%
\maketitle

\abstract{Regression models with functional responses and covariates constitute a powerful and increasingly important model class. However, regression with functional data poses well known and challenging problems of non-identifiability.  This non-identifiability can manifest itself in arbitrarily large errors for  coefficient surface estimates despite accurate predictions of the responses, thus invalidating substantial interpretations of the fitted models. 
We offer an accessible rephrasing of these identifiability issues in realistic applications of penalized linear function-on-function-regression and delimit the set of circumstances under which they are likely to occur in practice.
Specifically, non-identifiability that persists under smoothness assumptions on the coefficient surface can occur if the functional covariate's empirical covariance has a kernel which overlaps that of the roughness penalty of the spline estimator. 
Extensive simulation studies validate the theoretical insights, explore the extent of the problem and allow us to evaluate their practical consequences under varying assumptions about the data generating processes. A case study illustrates the practical significance of the problem. 
Based on theoretical considerations and our empirical evaluation, we provide immediately applicable diagnostics for lack of identifiability and give recommendations for avoiding estimation artifacts in practice.}

\section{Introduction}
The last two decades have seen rapid progress in regression methodology
for high-dimensional data, largely driven by applications to genomic data in the
``small $n$, large $p$'' paradigm.
In regression models with functional predictors, similar problems
arise from the fact that covariate information comes in the shape of
high-dimensional, strongly auto-correlated vectors of function evaluations.
Whenever the number of regression parameters to estimate
exceeds the number of observations, estimates are not unique and the resulting
model is not identifiable. To overcome this lack of identifiability, it
then becomes necessary to use heuristics or prior knowledge to impose additional
structural constraints like sparsity or smoothness. Results
then inherently depend -- at least to some degree -- on the assumptions
underlying the chosen regularization method. In the following, we present a
detailed analysis of the way in which smoothness assumptions combined with
properties of the data generating process affect estimation results for
function-on-function regression. We differentiate between two kinds of non-identifiability: First, \emph{simple non-identifiability} arising from low information content in the functional data. This can be cured by imposing structural assumptions like smoothness or sparsity on the estimators, i.e., by regularization of the estimators. Fig. \ref{fig:cca_6_plot} shows an example for 4 estimates under different structural assumptions all yielding identical fits in such a scenario. Second,  \emph{persistent non-identifiability} that remains despite regularization for certain combinations of models and data. Figures \ref{fig:sim:identExample} and ~\ref{fig:cca-c6-plot} show examples for the latter on synthetic and real data, respectively. 

The problem of -- especially \emph{persistent} -- non-identifiability is as yet under-appreciated in the functional data literature and analyzed here in depth for the first time. As software capable of fitting increasingly complex models with functional data becomes available
(e.g. \pkg{fda}, \citet{fda}; \pkg{fda.usc}, \citet{fda.usc}; \pkg{refund}, \citet{refund}; \pkg{PACE},
\citet{PACE}; \pkg{WFMM}, \citet{WFMM}), investigating the practical relevance of identifiability issues arising in these
models is both timely and important in this rapidly developing field. The present work
aims to perform such an investigation for the model class described in
\citet{ScheiplStaicuGreven2014} and implemented in \pkg{refund}'s \texttt{pffr}
function, while results carry over to other penalized function-on-function regression approaches such as those implemented in the \pkg{fda} package.

A popular approach in regression for functional data restricts the functional
coefficients to lie in the span of the first $K < n$ estimated eigenfunctions of
a functional covariate's covariance operator with the largest eigenvalues,
\citep[see][for example]{CardotFerratySarda1999,
CardotFerratySarda2003, YaoMuellerWang2005, Reiss:Ogden:2007, YaoMueller2010,
WuFanMueller2010}.
This functional principal component regression (FPCR) approach solves the
problem of overparameterization (i.e., non-identifiability of the functional effect)
by a -- usually drastic -- dimension reduction. The main
challenges in this approach then become 1) achieving good estimates of
the covariance's eigenfunctions (``functional principal components'' or FPCs),
eigenvalues, and FPC scores from observed functional data and 2)
choosing the regularization parameter $K$. In practice, the effect of the
functional covariate is estimated by using the first $K$ estimated FPC
scores as synthetic covariates. However, the critical assumption that the true
coefficient lies in the span of the first few empirical eigenfunctions
of a suitable (cross-)covariance operator estimate is impossible to verify empirically.
Due to the often wiggly and unsmooth nature of eigenfunctions of real data
this assumption can also lead to estimates that are difficult to interpret or
implausible to practitioners.

An alternative approach is to make assumptions on the functional coefficients
informed by insights into the problem at hand, e.g., sparsity or smoothness of
functional coefficients, and to estimate these functional coefficients subject
to an appropriate penalty (e.g., LASSO or smoothness penalties).
In this work, we will focus on smooth spline-based penalized regression models
for functional responses with functional covariates as described in
\citet{Ivanescu2011} and \citet{ScheiplStaicuGreven2014},
which constitute a powerful and flexible model class able to deal with the wealth of functional data increasingly collected in many fields of science.
Nevertheless, our considerations carry over to other approaches to estimate
smooth coefficient functions, such as approaches using derivative-based
penalties as advocated by \citet{RamsaySilverman2005}.
This paper describes the data settings in which penalized models remain unidentifiable
despite the penalty in Section \ref{sec:issue} and develops and evaluates suitable diagnostics and modified penalties for such settings.

Identifiability issues in functional regression have previously been discussed
in \citet{CardotFerratySarda2003} in the context of functional regression models
for scalar responses and also, briefly, in the context of models with both
functional responses and functional predictors by \citet{HeMullerWang2003},
\citet{ChiouMullerWang2004} and \citet{PrchalSarda2007}.
While results therein provide conditions for the theoretical existence and
unicity of solutions based on functional analysis arguments, they do not yield
empirically verifiable criteria to determine whether the conditions for unicity
are violated for a given data set. They also always assume that the true
coefficient surface lies in the space spanned by the eigenfunctions of a
(cross-)covariance operator.
As far as we are aware, case studies in the previous literature have implicitly
assumed that this assumption and the necessary conditions based on it will be
satisfied for observed data. This is problematic since
1) the theoretical conditions found in the previous literature are impossible
to satisfy, or at least verify, on finite samples of functional data in
finite resolution, and
2) our experience with applications of functional regression models as well as
results from simulation studies indicate that persistent non-identifiability leading to spurious coefficient estimates may occur quite regularly. This is obviously a
concern for applied statisticians desiring \emph{interpretable} regression
models associating functional covariates and (functional) responses.

Instead of relying on the functional analysis arguments suitable for
investigations of asymptotic properties of the theoretical model, we use simple
linear algebra to derive a condition for unicity of coefficient surface
estimates in realistic, finite sample data settings in which functional
covariates are observed with finite resolution in Section \ref{sec:issue}.
This allows us to give a
necessary and sufficient condition for persistent non-identifiability in penalized
function-on-function regression that is empirically verifiable and thus
applicable in realistic problems. The criterion is based on the amount of overlap between the kernel of the penalty matrix and the kernel of the design matrix for the functional effect. Simulation studies indicate that, in practice, severe errors due to non-identifiability are strongly associated with our criterion; thus, this criterion is the first one
that can be used in a wide variety of applications to assess identifiability or
lack thereof.
Our analyses also indicate that many widely used preprocessing techniques for
functional data which replace observed curves with spline-based or FPC-based
low-rank approximations  (i.e., pre-smoothing) or the centering of individual
observed curves will considerably increase the likelihood of identifiability
issues in many settings.

\begin{figure}[!htbp] {\centering}
% fig created in rev1/example.Rmd
\includegraphics[width=\textwidth]{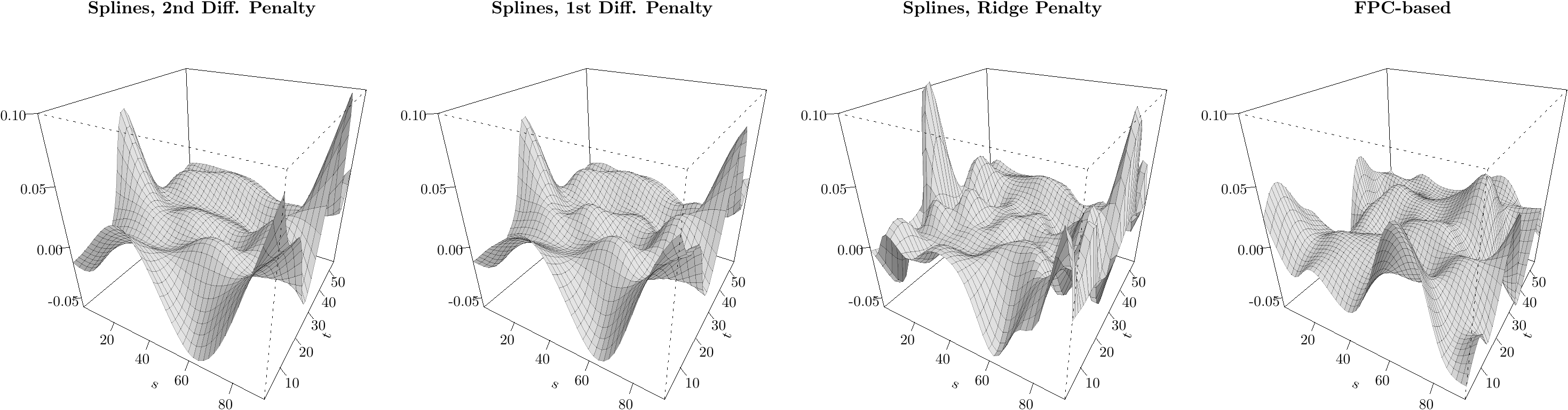}
\caption{\label{fig:cca_6_plot} Four coefficient surfaces for regressing RCST-FA on
pre-smoothed CCA-FA truncated to its first 6 empirical FPCs. All lead to identical fitted values. First [second] panel: Tensor product spline-based fits (12 marginal cubic B-spline basis functions) with second [first] order difference penalty. Third panel: Tensor product spline-based fit with a Ridge penalty. Fourth panel: FPC regression result (6 FPCs). All fits performed with \code{pffr()}.}
\end{figure}

We use the well-known \code{dti} data set -- publicly available in R-package
\pkg{refund} \citep{refund} -- to illustrate the issues we discuss here.
These data contain spatially indexed, i.e., \emph{functional}, measurements of fractional anisotropy (FA), a proxy variable for neuronal health, along 3 cerebral white matter tracts (WMTs)
of multiple sclerosis (MS) patients. To illustrate how strongly different
structural assumptions about the regression coefficient surface can affect the
results, we fit simple univariate functional linear models
\mbox{$E(Y(t)) =  \beta_0(t) + \int X(s)\beta(s,t)ds$} regressing FA along the right cortico-spinal WMT (RCST-FA, $Y(t)$) on pre-smoothed FA along the \emph{corpus callosum} WMT (CCA-FA, $X(s)$).
Figure \ref{fig:cca_6_plot} shows estimated coefficient surfaces $\hat\beta(s,t)$
for penalized spline based fits with second or first order difference penalties (two leftmost panels) or a ridge penalty (third panel from left), as well as the coefficient surface implied by a functional principal component (FPC) based fit (right). Even though the surfaces are quite different, they result in (practically) identical fitted values in this example for a setting with simple non-identifiability. The first and third panels from the left in the top row of Figure~\ref{fig:cca-c6-data-fits} show the data used in this example.
Graphical examples of the second kind of non-identifiability that persists even under penalization are shown in Figures~\ref{fig:sim:identExample} and \ref{fig:cca-c6-plot} for synthetic and real data, respectively. 

Our paper is structured as follows: Section~\ref{sec:model} defines the model and data structure under discussion. We present an accessible rephrasing of the fundamental issue (Section~\ref{sec:issue}) and derive necessary and sufficient conditions for settings in which $\beta(s,t)$ is or is not identifiable in Section~\ref{sec:issue:eqrep}. Section~\ref{sec:sim} follows up with an analysis of the scope of the problem based on simulated data, while Section~\ref{sec:example} describes a real-world example of the issue. The main conclusions we draw from our analysis are that the complexity of observed functional covariates puts hard limits on the identifiability of coefficient surfaces in a number of ways, that these limits can be diagnosed based on the data at hand, and that pre-processing of functional covariates can often exacerbate identifiability issues.

\section{Model and Data Structure}\label{sec:model}
In the following, bold symbols denote vectors and matrices, and calligraphic letters denote function domains, function spaces, or sets of functions or
vectors.

Define a simple function-on-function regression model as
\begin{align}
Y_i(t) &= \int_{\mathcal{S}} X_{i}(s)\beta(s,t)ds + \eps_{it}, %\qquad \eps_{it} \stackrel{\iid}{\sim} N(0, \sigma^2),
%\int_{\mathcal{S}} X_{i}(s)\beta(s,t)ds &\approx \int_{\mathcal{S}} X_{i}(s) \sum^K_k B_k(s,t) \theta_k ds\\
%&\approx \sum^K_k \underbrace{\sum_s w_s X_i(s)B_k(s,t)}_{=Z_{ki}(t)}\theta_k,
\label{eq:model}
\end{align}
where $Y_i(t)$ and $X_i(s)$, $i=1,\dots,N$, are functional responses and
covariates on closed intervals $\mathcal{T}$ and $\mathcal{S}$ in $\mathbb{R}$,
respectively, and assume that they are realizations of zero-mean square 
integrable stochastic processes  $Y(t) \in L^2[\mathcal{T}]$ and  
$X(s) \in L^2[\mathcal S]$ with continuous covariance functions, respectively. 
To simplify notation and exposition but without loss of generality, we assume 
that $E(Y_i(t))\equiv 0$ and $E(X_i(s))\equiv 0$. Propositions \ref{myfirstproposition} to
\ref{unique} in Section \ref{sec:issue} directly carry over to any model with an additive predictor that includes terms like $\int_{\mathcal{S}} X_{i}(s)\beta(s,t)ds$.
The further development in Section \ref{sec:issue} leading up to Proposition \ref{invert} assumes that the estimate for $\beta(s,t)$ minimizes a (penalized) quadratic loss function, which is equivalent to maximizing the likelihood of a model with i.i.d. Gaussian errors $\eps_{it}$, but does not strictly speaking depend on distributional assumptions about $\eps_{it}$ and is also very similar to the system of equations solved in each iteration of the penalized iteratively re-weighted least squares algorithm \citep[P-IWLS;][]{Wood2000} used to fit additive models like \eqref{eq:model} for non-Gaussian responses.

Due to the assumptions on the functional covariates, they can be represented by
a Karhunen-Lo\`{e}ve expansion 
\begin{align}\label{karhunen}
X_i(s) = \sum^\infty_{m=1} \xi_{im} \phi_m(s), 
\end{align}
with orthonormal $\phi_m(s)$, 
$\int_{\mathcal{S}} \phi_m(s)\phi_{m'}(s)ds =\delta_{mm'}$, and uncorrelated
zero-mean FPC scores $\xi_{im}$ with variances $\nu_1 \geq \nu_2 \geq \dots \geq 0, m \in  \mathbb{N}$.
The $\nu_m$ and $\phi_m(s), m=1,\dots,M,$ are the ordered eigenvalues and eigenfunctions of the covariance operator $K^X$ of $X(s)$, respectively, with the covariance function given by Mercer's theorem as 
\begin{align}
k^X(s,s')=\EV\left(X(s)X(s')\right) = \sum_{m=1}^\infty \nu_m \phi_m(s) \phi_m(s').
\end{align}

Since estimating $\beta(s,t)$ is an inverse problem, some kind of regularization
is required. Functional principal component based approaches like, for example,
\citet{YaoMuellerWang2005} restrict $\beta(s, t)$ to lie in the span of
the first $K$ estimated eigenfunctions $\hat\phi_m(s),\; m=1,\dots,K$ for all $t$.
The number of eigenfunctions $K$ that is used serves as the (discrete)
regularization parameter. In contrast, we will discuss and
analyze a penalized approach. % based on penalized splines. 
The underlying assumption 
is that $\beta(s,t)$ is a smooth function that can be well represented as a linear combination
of suitable basis functions defined on $\mathcal{S} \times \mathcal{T}$.  

In practice, functional responses $Y_i(t)$ and functional covariates $X_i(s)$
are observed on grid points \mbox{$\ms_i=(s_{i1},\dots,s_{i{S_i}})$} and
\mbox{$\mt_i=(t_{i1},\dots,t_{i{T_i}})$}. 
For simplicity,  we assume those to be
identical vectors $\ms, \mt$ 
with lengths $S$ and $T$, respectively, for each observation $i$. In the following, expressions $a(\ms)$ or $a(\mt)$ with a bold argument
denote the vector of evaluations of $a(\cdot)$ on the respective grid, e.g.,
$a(\ms) = (a(s_1), \dots, a(s_S))\tr$.

Model \eqref{eq:model} can then be approximated for observed data as  
\begin{align}
\dimm{Y_i(\mt)\tr}{1 \times T} &\approx
       \left(\dimm{\mw{\phantom{(}}}{S \times 1} \cdot \dimm{X_{i}(\ms)}{S \times 1}\right)\tr  \dimm{\beta(\ms, \mt)}{S \times T}
      + \dimm{\bm{\eps}_{i}}{1 \times T}, \label{eq:scalarobsmodel}
\intertext{
\noindent with $\beta(\ms,\mt)=\left[\beta(s_j,t_k)\right]\inds{j=1,\dots,S}{k=1,\dots,T}$ and 
$\bm{\eps}_{i}=(\eps_{it_1}, \dots, \eps_{it_T})\tr$. 
We also define a weight vector $\mw$ for numerical integration, e.g.~$\mw=(w_j)_{j=1,\dots,S}$ for simple quadrature via Riemann sums, with $w_j$ the length of the sub-interval of $\mathcal{S}$ represented by $s_j$. 
The symbol $\cdot$ denotes element-wise multiplication. 
The coefficient surface $\beta(\ms, \mt)$ is represented using a tensor product spline basis}
\beta(\ms, \mt) &= \dimm{\mB_s}{S \times K_s} \dimm{\mTheta_{\phantom{.}}}{K_s \times K_t} \dimm{\mB_t}{K_t \times T}\tr, \label{eq:betatensorrep}
\end{align}
with basis matrices $\mB_s$ and $\mB_t$ of $K_s$ and $K_t$ basis functions evaluated in $\ms$ and $\mt$, respectively, 
and spline coefficient matrix $\mTheta$. The roughness penalty matrix for the surface is given by 
$\mP \equiv \mP(\lambda_s, \lambda_t) = \lambda_s (\mP_s \otimes \mI_{K_t}) + \lambda_t (\mI_{K_s} \otimes \mP_t)$ \citep{Wood04lowrank}, where 
$\lambda_s, \lambda_t$ are smoothing parameters to be estimated from the data
and $\mP_s$ and $\mP_t$ are the fixed and known marginal penalty matrices for 
the $s$- and $t$-directions, respectively. 

Estimation and inference  is described in more detail in \citet{Ivanescu2011} and \citet{ScheiplStaicuGreven2014}. In the following, our considerations are not limited to simple models such as model \eqref{eq:model}, but carry over to more general models $\tilde Y_i(t) =  \eta_i(t) + \int_{\mathcal{S}} X_{i}(s)\beta(s,t)ds + \eps_{it}$ by using 
$Y_i(t) = \tilde Y_i(t) -  \eta_i(t)$. The additive predictor $\eta_i(t)$ represents the sum of other terms in the model such as a global functional intercept $\beta_0(t)$,
index-varying linear or smooth effects of scalar covariates $x$ like $x_i \beta(t)$ or $f(x_i, t)$, scalar or functional random effects, etc. \cite{ScheiplStaicuGreven2014} contains methods and applied examples for this class of flexible additive functional regression models. Of course, these more general models may suffer from additional identifiability problems caused by collinearity or concurvity of the terms in the additive predictor which are outside the scope of this paper.
While we focus our discussion on a spline-based approach, the representation in \eqref{eq:betatensorrep} also accommodates other choices of basis functions and penalties.

\section{Identifiability}\label{sec:issue}
In this section, we discuss potential sources of non-identifiability in model
\eqref{eq:model}. The first subsection restates some known results on these
issues for the theoretical model \eqref{eq:model} with truly functional
observations $X_i(s)$ and $Y_i(t)$. Subsection \ref{sec:issue:eqrep} then
discusses identifiability for the finite resolution vector data available
in practice.

\subsection{Identifiability in the theoretical model}\label{sec:issue:theory}

It is well known (e.g. \citet[][c.f.~p.~5] {PrchalSarda2007},
\citet[][Th.~4.3.~c]{HeMullerWang2003}) that coefficient surfaces are identifiable
only up to the addition of functions in the kernel of $K^X$, i.e., if
$\beta(\cdot,t)$ fulfills model \eqref{eq:model}, so does
$\beta(\cdot,t) + \beta_K(\cdot,t)$ for any $\beta_K(\cdot,t)$ with
$\int_\mathcal{S} k^X(s,v) \beta_K(v, t) dv = 0$ for all $s, t$. Thus, we have
identifiability only when the kernel is trivial.
\begin{proposition}
\label{myfirstproposition}
The coefficient surface $\beta(s,t)$ in \eqref{eq:model}
is identifiable if and only if
\linebreak
$\kernel(K^X) = \{0\}$.
\end{proposition}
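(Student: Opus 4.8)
The plan is to reduce identifiability of the surface to injectivity of the covariance operator $K^X$, building on the ambiguity characterisation recalled just above the proposition. By definition, $\beta$ is identifiable iff any two coefficient surfaces $\beta$, $\tilde\beta$ that induce the same model do so only when they coincide in $L^2(\mathcal S\times\mathcal T)$; and since $Y_i(t)$ depends on the surface only through the conditional mean $\EV(Y_i(t)\mid X_i)=\int_{\mathcal S}X_i(s)\beta(s,t)\,ds$, two surfaces induce the same model exactly when, writing $\delta=\beta-\tilde\beta$, one has $\int_{\mathcal S}X(s)\delta(s,t)\,ds=0$ almost surely for every $t$. So it suffices to show that, under $\kernel(K^X)=\{0\}$, this forces $\delta\equiv 0$, and that it can fail when the kernel is non-trivial.

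The substantive step is a variance identity for the linear functional $L_t:=\int_{\mathcal S}X(s)\delta(s,t)\,ds$. Since $\EV(X(s))\equiv 0$, $L_t$ has mean zero, so $L_t=0$ almost surely iff $\Var(L_t)=0$; and by Fubini together with $k^X(s,s')=\EV(X(s)X(s'))$,
\[
\Var(L_t)=\int_{\mathcal S}\int_{\mathcal S}\delta(s,t)\,k^X(s,s')\,\delta(s',t)\,ds\,ds'=\langle\delta(\cdot,t),\,K^X\delta(\cdot,t)\rangle .
\]
As $K^X$ is self-adjoint and positive semidefinite, $\langle g,K^Xg\rangle=\|(K^X)^{1/2}g\|^2$, so $\Var(L_t)=0$ iff $(K^X)^{1/2}\delta(\cdot,t)=0$ iff $K^X\delta(\cdot,t)=0$ iff $\delta(\cdot,t)\in\kernel(K^X)$. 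Hence the models of $\beta$ and $\tilde\beta$ agree exactly when $\delta(\cdot,t)\in\kernel(K^X)$ for every $t$: this is the converse of the ``if'' statement recalled before the proposition and shows that the functions $\beta_K(\cdot,t)$ in the kernel are the \emph{only} source of ambiguity.

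Both directions then follow at once. If $\kernel(K^X)=\{0\}$, the criterion gives $\delta(\cdot,t)=0$ in $L^2(\mathcal S)$ for each $t$, whence $\|\delta\|_{L^2(\mathcal S\times\mathcal T)}^2=\int_{\mathcal T}\|\delta(\cdot,t)\|^2\,dt=0$ and $\beta$ is identifiable. Conversely, if $0\neq h\in\kernel(K^X)$, set $\beta_K(s,t)=h(s)$ (equivalently $h(s)c(t)$ for any $c\not\equiv 0$); then $\beta_K(\cdot,t)=h\in\kernel(K^X)$ for all $t$, so $\int_{\mathcal S}k^X(s,v)\beta_K(v,t)\,dv=0$ for all $s,t$, and by the ``if'' statement $\beta$ and $\beta+\beta_K\neq\beta$ define the same model, so $\beta$ is not identifiable. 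I expect no genuine obstacle here: the only care needed is (i) translating ``same model'' into equality of the conditional mean curves almost surely, so that for each fixed $t$ the event $\{L_t\neq 0\}$ is null and $\Var(L_t)=0$, and (ii) invoking the standard fact that for a self-adjoint positive semidefinite operator $\langle g,K^Xg\rangle=0$ iff $K^Xg=0$; everything else is Fubini and the definition of $K^X$.
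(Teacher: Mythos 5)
Your proof is correct. Note, though, that the paper does not give a self-contained argument at all: it simply invokes the known results of \citet{PrchalSarda2007} and \citet{HeMullerWang2003} that the surface is identifiable ``only up to the addition of functions in the kernel of $K^X$'' and then reads off the proposition. What you add is precisely the missing substance behind that citation, namely the quadratic-form identity $\Var\bigl(\int_{\mathcal S} X(s)\delta(s,t)\,ds\bigr)=\langle \delta(\cdot,t),K^X\delta(\cdot,t)\rangle$ (Fubini plus $\EV X\equiv 0$), combined with the fact that for a self-adjoint positive semidefinite operator $\langle g,K^Xg\rangle=0$ iff $K^Xg=0$. This establishes in one stroke that two surfaces induce the same conditional mean almost surely exactly when their difference lies in $\kernel(K^X)$ for every $t$, which gives both directions: triviality of the kernel forces $\delta\equiv0$, and any $0\neq h\in\kernel(K^X)$ yields the counterexample $\beta+h$. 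The paper's route buys brevity by outsourcing the functional-analytic content to the references (where it is embedded in a more general framework assuming $\beta$ lies in the span of the eigenfunctions); yours buys a short, elementary, self-contained proof whose only care points — interpreting ``same model'' as almost-sure equality of the conditional means, and the square-root argument (equivalently, Mercer's expansion $\langle\delta,K^X\delta\rangle=\sum_m\nu_m\langle\delta,\phi_m\rangle^2$) — you handle correctly.
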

An important secondary consequence is that predicted responses
\mbox{$\int_{\mathcal{S}} X_{i}(s)\beta(s,t)ds$} can be entirely unaffected by
large changes in $\beta(s,t)$. Thus, strategies for detection of identifiability
problems cannot be based on predictive performance in cross-validation,
bootstrapping and related methods.

Non-identifiability in Proposition \ref{myfirstproposition} occurs when
$\kernel(K^X)$ is non-trivial, when the eigenfunctions in \eqref{karhunen}
with non-zero eigenvalues $\nu_m$ do not span the $L^2[\mathcal{S}]$.
While it is possible to assume a trivial kernel in theory
\citep[e.g.][equation (4)]{PrchalSarda2007},
in practice, functional covariates are observed on a finite number of
grid points $S$ and the empirical covariance for $N$ observations thus can have
at most $\min(N, S)$ non-zero eigenvalues. As is exploited in functional
principal component analysis \citep[e.g.][]{RamsaySilverman2005}, functional
observations are often simple enough to be represented accurately by a
relatively small number of eigenfunctions, with eigenvalues of higher order
small compared to noise or measurement error. It is also wide-spread practice to
use pre-smoothed versions of observed functional covariates as inputs for models
like \eqref{eq:model} \citep[e.g.~][]{James2002, RamsaySilverman2005}, and these
will have a non-empty kernel since they are represented as linear combinations
of a limited number of basis functions.
Basis function representations of $X(s)$ are also used when sparsely or incompletely
observed functional covariates have to be imputed on a grid of $s$-values to be used as
inputs for model \eqref{eq:model} \citep[e.g.][]{Goldsmith:etal:2011}.

In the following section, we thus investigate identifiability problems for
finite-sample finite-resolution functional data and the interplay between
the rank of the observed covariance, the rank of the basis used to represent
$\beta(s,t)$ in $s$-direction and the penalty used in the penalized estimation
approach for $\beta(s,t)$ introduced in Section \ref{sec:model}.

\subsection{Identifiability in practice}\label{sec:issue:eqrep}

\subsubsection*{Rank-deficient design matrix}

In practice, $\beta(s,t)$ is represented as a linear combination of a finite number $K_s K_t$ of basis functions, see \eqref{eq:betatensorrep}. For the following, we will assume that the corresponding approximation error is negligible and that a suitably flexible basis has been chosen for $\beta(s,t)$. 
Combining \eqref{eq:scalarobsmodel} and \eqref{eq:betatensorrep},
we can write the model as
\begin{equation}
\dimm{\mY_{{\phantom{i}}}}{N \times T\phantom{i}} =
\dimm{\mX_{{\phantom{i}}}}{N \times S\phantom{i}}
\dimm{\mW_{\phantom{i}}}{S \times S\phantom{i}} \dimm{\mB_s}{S \times K_s}
\dimm{\mTheta_{\phantom{i}}}{K_s \times K_t\phantom{i}} \dimm{\mB_t^T}{K_t \times T\phantom{i}} +
\dimm{\meps_{\phantom{i}}}{N \times T\phantom{i}},
\end{equation}
where $\mY = \left[Y_i(t_j)\right]\inds{i=1, \dots, N}{j=1, \dots, T}$,
$\mX=\left[X_i(s_l)\right]\inds{i=1,\dots,N}{l=1,\dots,S}$,
$\bm W = \diag(\mw)$ and
$\meps = \left[\eps_{it_j}\right]\inds{i=1, \dots, N}{j=1,\dots, T}$.
Using $\ovec(\bm{ACB}) = (\mB^T \otimes \bm A) \ovec(\bm{C})$ yields
\begin{equation} \label{idmodel}
\dimm{\ovec(\mY)_{{\phantom{i}}}}{NT \times 1} = [\dimm{\mB_t}{T \times K_t}
\otimes (\dimm{\mX_{{\phantom{i}}}}{N \times S\phantom{i}}
\dimm{\mW_{{\phantom{i}}}}{S \times S} \dimm{\mB_s}{S \times K_s})]
\dimm{\ovec(\mTheta)_{{\phantom{i}}}}{K_t K_s \times 1} + \dimm{\ovec(\meps)_{{\phantom{i}}}}{NT \times 1}.
\end{equation}
In the linear regression model \eqref{idmodel} for $\bm y =\ovec(\mY)$, the parameter vector $\mtheta = \ovec(\mTheta)$ is identifiable if and only if the design matrix $\mD = \mB_t \otimes (\mX \mW \mB_s)$ is of full column-rank.

The rank of $\mD$ is equal to $\rg(\mD) = \rg(\mB_t)\rg(\mX \mW \mB_s)$.
$\mB_t$ will typically be of full rank $K_t$ as long as $K_t \leq T$, as the
$K_t$ spline functions form a basis and the columns of $\mB_t$ are thus linearly
independent for non-pathological cases.
For $\mX$, let $\mX = \mXi \mPhi$ be the empirical version of the
Karhunen-Lo\`{e}ve expansion \eqref{karhunen}, where
$\mX\tr\mX = \mPhi\tr \bm\Lambda \mPhi$ with $\mPhi\tr$ an $S \times M$ orthonormal
matrix of eigenvectors, $M = \rg(\mX)$,
$\bm\Lambda = \operatorname{diag}(\hat\lambda_1, \dots, \hat\lambda_M)$ a
diagonal matrix of ordered positive eigenvalues and $\mXi$ containing $M$
columns of estimated scores with empirical variances
$\hat\lambda_1, \dots, \hat\lambda_M$.
Then, by construction, the matrix $\mX \mW \mB_s$ is at most of rank
$\min(N, M, K_s, S) = \min(M, K_s)$, since $M \leq \min(N, S)$.

We then have the following proposition:
\begin{proposition}
\label{sonjaproposition1}
Assume that $\mB_t$ is of full rank $K_t$. Then, the design matrix $\mD = \mB_t \otimes (\mX \mW
\mB_s)$ in model \eqref{idmodel} is
%The rank of $\mXi \mPhi \mW \mB_s$ is equal to $$\rg(\mXi \mPhi \mW \mB_s)= \rg(\mPhi \mW \mB_s).$$ It is
rank-deficient if and only if
\begin{enumerate}
 \item[a)] $M < K_s$ or
 \item[b)] if $M \geq K_s$, but $\rg(\mPhi \mW \mB_s) < K_s$ .%, where $\mPhi = \left[\phi_{m}(s_j)\right]\inds{m=1, \dots, M}{j=1, \dots, S}$.
\end{enumerate}
\end{proposition}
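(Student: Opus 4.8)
The plan is to reduce the rank-deficiency question for $\mD$ to a statement about the single factor $\mX\mW\mB_s$, and then to strip away the score matrix $\mXi$. First I would invoke the Kronecker identity $\rg(\mA\otimes\mB)=\rg(\mA)\rg(\mB)$, which gives $\rg(\mD)=\rg(\mB_t)\,\rg(\mX\mW\mB_s)=K_t\,\rg(\mX\mW\mB_s)$ under the standing assumption that $\mB_t$ has full rank $K_t$. Since $\mD$ has $K_tK_s$ columns, it is of full column rank — equivalently, $\mtheta=\ovec(\mTheta)$ is identifiable in \eqref{idmodel} — if and only if $\rg(\mX\mW\mB_s)=K_s$. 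The whole proposition is thus a statement about the $N\times K_s$ matrix $\mX\mW\mB_s$.

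Next I would use the empirical Karhunen--Lo\`eve factorization $\mX=\mXi\mPhi$ to write $\mX\mW\mB_s=\mXi\,(\mPhi\mW\mB_s)$, where $\mPhi\mW\mB_s$ is $M\times K_s$. The key observation is that $\mXi$ has full column rank $M$: from $\mX\tr\mX=\mPhi\tr\bm\Lambda\mPhi$ with $\mPhi$ having orthonormal rows one reads off $\mXi\tr\mXi=\bm\Lambda$, a diagonal matrix with strictly positive entries $\hat\lambda_1,\dots,\hat\lambda_M$, so $\mXi$ is injective as a linear map. Left-multiplication by an injective matrix preserves the kernel of any matrix to which it is applied, hence $\kernel(\mXi\,C)=\kernel(C)$ and, by rank--nullity on a matrix with $K_s$ columns, $\rg(\mXi\,C)=\rg(C)$ for $C=\mPhi\mW\mB_s$. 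Therefore $\rg(\mX\mW\mB_s)=\rg(\mPhi\mW\mB_s)$, and $\mD$ is rank-deficient if and only if $\rg(\mPhi\mW\mB_s)<K_s$.

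It then remains to unpack $\rg(\mPhi\mW\mB_s)<K_s$ into the two alternatives. Since $\mPhi\mW\mB_s$ has $M$ rows and $K_s$ columns, its rank is at most $\min(M,K_s)$. If $M<K_s$ (case a), this bound already forces $\rg(\mPhi\mW\mB_s)\le M<K_s$, so $\mD$ is necessarily rank-deficient. If instead $M\ge K_s$, then $\mPhi\mW\mB_s$ can attain full column rank $K_s$, and it fails to do so exactly when $\rg(\mPhi\mW\mB_s)<K_s$ (case b). As the conditions $M<K_s$ and $M\ge K_s$ are exhaustive and mutually exclusive, combining the two cases yields the claimed equivalence.

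I do not expect a genuine obstacle here: the argument is essentially bookkeeping with ranks of Kronecker products and a single injective factor. The one point that deserves explicit care is the justification that $\mXi$ has full column rank — equivalently, that all retained empirical eigenvalues are strictly positive, which is exactly how $M=\rg(\mX)$ was defined — since this is what licenses replacing $\mX\mW\mB_s$ by $\mPhi\mW\mB_s$ without changing the rank; I would also note in passing that $\mW=\diag(\mw)$ is invertible, so it does not affect any of the rank computations and merely reweights the columns of $\mB_s$ inside the criterion in (b).
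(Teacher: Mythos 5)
Your proposal is correct and follows the same overall reduction as the paper: both arguments turn on the identity $\rg(\mX\mW\mB_s)=\rg(\mPhi\mW\mB_s)$ (together with $\rg(\mD)=\rg(\mB_t)\,\rg(\mX\mW\mB_s)=K_t\,\rg(\mX\mW\mB_s)$), after which the exhaustive split into $M<K_s$ and $M\geq K_s$ gives cases a) and b) exactly as in the statement. Where you differ is in the key lemma used to prove that identity. The paper obtains it from the Frobenius rank inequality $\rg(\bm{A}\bm{B}\bm{C})\geq\rg(\bm{A}\bm{B})+\rg(\bm{B}\bm{C})-\rg(\bm{B})$ (Harville, Th.~17.5.1), applied with $\bm{A}=\mXi$, $\bm{B}=\mPhi$, $\bm{C}=\mW\mB_s$ and the fact $\rg(\mXi\mPhi)=\rg(\mPhi)=M$; you instead note that $\mXi=\mX\mPhi\tr$ satisfies $\mXi\tr\mXi=\bm\Lambda$, a positive definite diagonal matrix (using $\mPhi\mPhi\tr=\mI_M$ and the positivity of the retained eigenvalues, i.e.\ $M=\rg(\mX)$), so $\mXi$ is injective and left-multiplication by it preserves kernels and hence ranks. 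Your route is more elementary and self-contained — it needs no external rank inequality and in fact delivers the slightly stronger conclusion $\kernel(\mX\mW\mB_s)=\kernel(\mPhi\mW\mB_s)$, which is the object that matters for identifiability — while the paper's inequality argument is marginally more economical in its assumptions, using only $\rg(\mXi\mPhi)=\rg(\mPhi)$ without the explicit Gram-matrix computation for $\mXi$. Your closing observations (full column rank of $\mD$ iff $\rg(\mX\mW\mB_s)=K_s$, and invertibility of $\mW=\diag(\mw)$ being immaterial to the rank bookkeeping) are consistent with the paper's setup, and I see no gap in the argument.
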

\begin{proof} See Appendix \ref{sec:proof-sonjaproposition1}.
% As $\rg(\mX)=\rg(\mXi \mPhi)=\rg(\mPhi)=M$,
% \begin{align*}
% \rg(\mPhi \mW \mB_s) &\geq \rg(\mXi \mPhi \mW \mB_s) = \rg(\mX \mW \mB_s) \\
% 	&\geq \rg(\mXi \mPhi) + \rg(\mPhi \mW \mB_s) - \rg(\mPhi)
% 	= \rg(\mPhi \mW \mB_s)
% \end{align*}
% using \citet[][Th.~17.5.1]{Harville1997}.
% Thus,
% $\rg(\mX \mW \mB_s) = \rg(\mPhi \mW \mB_s)$. The rank will be less than
% full if $\rg(\mPhi \mW \mB_s)<K_s$, including if $M<K_s$, as $\rg(\mPhi \mW
% \mB_s) \leq M$ by construction.
\end{proof}

Proposition \ref{sonjaproposition1} yields a direct criterion to check for
rank-deficient design matrices. Case a) corresponds to a low-rank covariance for
the $X$-process. In this case, the functional predictor does not carry enough
information, as measured by the number of eigenfunctions $\phi_m(s)$ with non-zero eigenvalues, compared to the number of parameters to estimate.
Case b) means that even if $M \geq K_s$, non-identifiability can occur if
the span of the basis used for $\beta(s,t)$ in $s$-direction contains functions
in $\kernel(K^X)$, as measured by numerical integration using the integration
weights $\mw$. More intuitively, this means that the basis for the
$s$-direction of $\beta(s,t)$ accommodates modes of variation orthogonal to those of the $X(s)$-process.

Note that identifiability is determined by the interplay between the complexity of the $X(s)$ and the coefficient basis for $\beta(s,t)$.
Thus, more data will not necessarily resolve identifiability issues:
A finer grid $\ms$ will only eliminate identifiability problems present on a coarser grid if there is sufficient small-scale structure in $X(s)$ that is
also present in the basis used for $\beta(s, t)$ in $s$-direction.
Increasing the sample size $N$ will likewise eliminate problems with
identifiability only if the low-rank of the covariate's covariance is due to
small sample size.

\subsubsection*{The effect of the penalty}

In cases of non-identifiability, the best we can hope for is partial
identifiability of the parameters in a parameter subspace, i.e., identifiability
under additional assumptions on the parameters. In this vein, functional
principal component regression \citep[e.g.][]{YaoMuellerWang2005} restricts
$\beta(s, t)$ to lie in the span of the first $K$ eigenfunctions
$\phi_m(s),\; m=1,\dots,K,$ for all $t$.
Remaining problems then include the fact that the $\hat\phi_m(s)$ are estimated
quantities in practice, with corresponding measurement error, and the choice of
$K$, which can strongly affect the shape of the resulting function estimate
\citep{Crainiceanu:etal:2009}. Also, this approach couples assumptions on the
shape of $\beta(s, t)$ to properties of the space spanned by the retained $\phi_m(s)$. In particular, their smoothness determines that of $\beta(s,t)$, and inclusion of higher-order eigenfunctions often leads to wiggly surface estimates that are hard to interpret and unstable under replication.

Here, we focus on a penalized approach assuming smoothness of $\beta(s,t)$ and investigate the effects of the penalty on identifiability. While the use of a penalized approach is well-known to avoid identifiability problems due to high correlation between observations at neighboring grid points \citep[e.g.][Ch. 15.2]{RamsaySilverman2005}, the full
interplay between penalty and identifiability is, we believe, not fully
understood and under-appreciated.

Consider again the design matrix $\mD = \mB_t \otimes (\mX \mW \mB_s)$ of rank
$d$ with the singular value decomposition
$\mD = \mV \mSigma \mU\tr = (\mV_t \otimes \mV_s) (\mSigma_t \otimes \mSigma_s)(\mU_t\tr \otimes \mU_s\tr)$,
with $\mV_t \mSigma_t \mU_t\tr$ and $\mV_s \mSigma_s \mU_s\tr$ the singular value
decompositions of $\mB_t$ and $\bm D_s := \mX \mW \mB_s$, respectively.
Let indices $_+$ and $_0$ denote the corresponding sub-matrices obtained by
removing columns and/or rows corresponding to zero and non-zero singular values,
respectively. We assume in the following that $\mB_t$ is of full rank $K_t$.
Then,
$\mD = \mV_+ \mSigma_+ \mU_+\tr = (\mV_t \otimes \mV_{s+}) (\mSigma_t \otimes \mSigma_{s+})(\mU_t\tr \otimes \mU_{s+}\tr)$.

Thus, for any given $\mtheta_\star$ with
$\mD\mtheta_\star =: \mf$, there exists a linear subspace
$\mathcal{H}_f \subset \real^{K_sK_t}$ of dimension $(K_sK_t -d)$ given by
\mbox{$\mathcal{H}_f = \{ \mtheta \in \mathds{R}^{K_sK_t} : \mD \mtheta = \mf\} = \{\mtheta_\star + \mtheta_0: \mtheta_0 \in \image(\mU_0)\}$}
that yields identical fits $\mf$.

If we assume our parameter function to come from a space of smooth functions, we can select the smoothest solution on a given hyperplane $\mathcal{H}_f$ by minimizing $\btheta\tr \bP \btheta$ for a suitable penalty matrix $\bP$ that penalizes roughness of the function parameterized by $\btheta$. \emph{Simple} non-identifiability occurs if this minimum is unique, \emph{persistent} non-identifiability occurs if it is not.
We have the following proposition regarding uniqueness of the corresponding minimum.
\begin{proposition} \label{unique}
 Let
 $\bP = \lambda_s (\bm{I}_{K_t} \otimes \bP_s) + \lambda_t (\bP_t \otimes \bm{I}_{K_s})$,
 with $\bP_s$ and $\bP_t$ positive semi-definite matrices.
 Assume that $\bm B_t$ is of full rank $K_t$, that
 $\rg(\mP_t) < K_t$ and that $\lambda_s >0, \lambda_t \geq 0$.
 Then, for any $\mf \in \image(\bm D)$ there is a unique minimum
$\min_{ \btheta \in \mathcal{H}_f }\{ \btheta\tr \bP \btheta\}$ if and only if $\kernel(\bm D_s\tr  \bm D_s ) \cap \kernel(\bP_s) = \{\bm 0\}$.
\end{proposition}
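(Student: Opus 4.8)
The plan is to first turn the statement about the constrained minimum into a statement purely about the intersection of two kernels, and then to exploit the Kronecker structure of both $\bm D$ and $\bP$ to reduce that intersection condition to the $s$-direction.

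First I would observe that, since $\bP$ is a sum of positive semi-definite matrices with non-negative coefficients, the objective $\btheta \mapsto \btheta\tr\bP\btheta$ is a non-negative convex quadratic, so its minimum over the affine subspace $\mathcal{H}_f = \btheta_\star + \image(\mU_0) = \btheta_\star + \kernel(\bm D)$ is attained. Writing $\btheta = \btheta_\star + \btheta_0$ with $\btheta_0 \in \kernel(\bm D)$ and expanding, a routine computation (using that for a positive semi-definite $\bP$ one has $\bP\btheta_0 = \bm 0$ iff $\btheta_0\tr\bP\btheta_0 = 0$) shows that the set of minimisers is a coset of the linear space $\kernel(\bm D) \cap \kernel(\bP)$; this space does not depend on $\mf$, so the minimum is unique for every $\mf \in \image(\bm D)$ if and only if $\kernel(\bm D) \cap \kernel(\bP) = \{\bm 0\}$. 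It then remains to show that this is equivalent to $\kernel(\bm D_s\tr \bm D_s) \cap \kernel(\bP_s) = \{\bm 0\}$.

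Next I would describe $\kernel(\bm D)$ explicitly: from $(\bm B_t \otimes \bm D_s)\ovec(\mTheta) = \ovec(\bm D_s \mTheta \bm B_t\tr)$ and the full column rank of $\bm B_t$, one gets $\ovec(\mTheta) \in \kernel(\bm D)$ iff $\bm D_s \mTheta = \bm 0$, i.e.\ iff every column of $\mTheta$ lies in $\kernel(\bm D_s) = \kernel(\bm D_s\tr \bm D_s)$. For the ``if'' direction, take $\bm u = \ovec(\mTheta) \in \kernel(\bm D) \cap \kernel(\bP)$; since $\lambda_s > 0$, $\lambda_t \geq 0$ and $\bP_s, \bP_t$ are positive semi-definite, $\bm u\tr\bP\bm u = 0$ forces both summands to vanish, and the $\lambda_s$-term, together with $(\bm I_{K_t} \otimes \bP_s)\ovec(\mTheta) = \ovec(\bP_s\mTheta)$, gives $\bP_s\mTheta = \bm 0$; hence every column of $\mTheta$ lies in $\kernel(\bm D_s\tr \bm D_s) \cap \kernel(\bP_s) = \{\bm 0\}$, so $\bm u = \bm 0$. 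For the ``only if'' direction I argue by contraposition: given a non-zero $\bm v \in \kernel(\bm D_s\tr \bm D_s) \cap \kernel(\bP_s)$, the assumption $\rg(\bP_t) < K_t$ lets me pick a non-zero $\bm w \in \kernel(\bP_t)$, and then $\mTheta = \bm v\bm w\tr$ satisfies $\bm D_s\mTheta = \bm 0$ and $\lambda_s\bP_s\mTheta + \lambda_t\mTheta\bP_t = \lambda_s(\bP_s\bm v)\bm w\tr + \lambda_t\bm v(\bP_t\bm w)\tr = \bm 0$, so $\ovec(\mTheta)$ is a non-zero element of $\kernel(\bm D) \cap \kernel(\bP)$.

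The step I expect to require the most care is the ``peeling'' argument in the ``if'' direction: deducing from $\bm u\tr\bP\bm u = 0$ that the $s$-penalty term by itself annihilates $\bm u$. This is exactly where positive semi-definiteness of \emph{both} marginal penalties and the strict inequality $\lambda_s > 0$ are used, and it is why the roles of $s$ and $t$ are not symmetric; correspondingly, the hypothesis $\rg(\bP_t) < K_t$ is precisely what makes the rank-one construction in the converse available. Everything else is routine Kronecker-product and rank bookkeeping.
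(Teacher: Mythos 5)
Your proof is correct, but it takes a somewhat different route from the paper's. The paper reparametrizes via the singular value decomposition, writing $\btheta$ in terms of $\mv_+ = \bU_+\tr\btheta$ (fixed on $\mathcal{H}_f$) and $\mv_0 = \bU_0\tr\btheta$ (free), so that uniqueness of the constrained minimum becomes positive definiteness of $\bU_0\tr\bP\bU_0$; this is then checked using $\bU_0 = \mU_t\otimes\mU_{s0}$, with the converse handled by a vector $\bm u_t\otimes\bm u_s$ built from $\kernel(\bP_t)$ and the kernel overlap. You instead characterize the minimizer set directly, by convexity, as a coset of $\kernel(\bm D)\cap\kernel(\bP)$, and then reduce the triviality of that intersection to $\kernel(\mD_s\tr\mD_s)\cap\kernel(\bP_s)=\{\bm 0\}$ through the identities $\bm D\,\ovec(\mTheta)=\ovec(\mD_s\mTheta\mB_t\tr)$ and $(\bm I_{K_t}\otimes\bP_s)\ovec(\mTheta)=\ovec(\bP_s\mTheta)$, using the "peeling" of $\btheta\tr\bP\btheta=0$ into its two nonnegative summands (where $\lambda_s>0$ enters) and, for the converse, the rank-one construction $\mTheta=\bm v\bm w\tr$ with $\bm w\in\kernel(\bP_t)$ — which is the coordinate-free analogue of the paper's $\bm u_t\otimes\bm u_s$. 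Both arguments use the hypotheses in the same places (full rank of $\mB_t$, $\lambda_s>0$, $\rg(\mP_t)<K_t$ only for the converse). What your version buys is a cleaner, SVD-free argument and an explicit statement that the minimizer set is a translate of $\kernel(\bm D)\cap\kernel(\bP)$; what it gives up is the paper's explicit closed form for the unique smoothest point, $\btheta_f=\mH\bU_+\mv_+$ with $\mH = (\mI_{K_sK_t} - \bU_0(\bU_0\tr\bP\bU_0)^{-1}\bU_0\tr\bP)$, which the paper reuses afterwards in the nested-minimization representation of the penalized least squares problem, so your route proves the proposition as stated but would need a small supplement to support that later discussion.
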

\begin{proof} See Appendix \ref{sec:proof-unique}.
\end{proof}

The assumption that $\mP_t$ is of less-than-full rank is natural in the context of
derivative-based penalties and excludes cases like the ridge penalty
$\mP_t =  \mI_{K_t}$, which would have the same effect as a full-rank penalty
$\mP_s = \mI_{K_s}$, even in cases of a kernel overlap between $\mD_s\tr\mD_s$
and $\bP_s$. A potentially full-rank $\mP_t$ would change the 'if and only if' in
Propositions \ref{unique} and \ref{invert} below to 'if'.

Proposition \ref{unique} shows that in the case of a kernel overlap
$\kernel(\mD_s\tr\mD_s ) \cap \kernel(\bP_s) \neq \{\bm 0\}$, the additional
side condition $\btheta\tr \bP \btheta \ra \min$ does not yield a unique
smoothest point on the hyperplane and the model remains unidentifiable.
On the other hand, if there is no kernel overlap, there is a unique smoothest
point $\btheta_f \in \mathcal{H}_f$ %on the hyperplane with $\btheta_f\tr \bP \btheta_f = \min_{\btheta \in \mathcal{H}_f} \btheta\tr \bP \btheta$
and this unique point has the form of a projection of $\btheta$ % = \bU_+ \mv_+ + \bU_0 \mv_0$
along the hyperplane. Note that for the ridge penalty $\bP = \lambda \mI_{K_sK_t}$, one
obtains the projection onto the image $\image(\bD)$, which sets the part in the
kernel of $\bD\tr\bD$ to zero.
More generally, a smoothness penalty $\bP$ generates a projection that may have
a non-zero component in the kernel of $\bD\tr\bD$ if this yields a smaller
overall penalty value.
In this case of no kernel overlap, we thus have a weak form of identifiability,
which guarantees that there is a unique smoothest representative on any
hyperplane of parameters giving the same conditional distribution for $\mY$.

%For estimation, let us thus restrict our parameter space to those parameters having a small penalty value, $\btheta\tr (\bm{I}_{K_t} \otimes \bP_s) \btheta \leq C_s$ and $\mtheta\tr (\mP_t \otimes \mI_{K_s}) \mtheta \leq C_t$ for some suitable constants $0 \leq C_s \leq \infty$, $0 \leq C_t \leq \infty$. For typical choices of the penalty this assumption puts limits on the roughness of $\beta(s,t)$ by restricting a function of its derivatives in both $s$ and $t$ direction \citep[e.g.][]{RamsaySilverman2005,ScheiplStaicuGreven2014}. Without loss of generality, we assume $\mP_s$ and $\mP_t$ to be symmetric and positive semi-definite.

%Note that maximizing the likelihood is equivalent to minimizing the least squares criterion. Thus, consider the restricted minimization problem \bea\label{constraintmin} \nn \| \by - \bD \btheta\|^2 \ra \min \qquad \text{s.t.} \quad \btheta\tr (\bm{I}_{K_t} \otimes \bP_s) \btheta \leq C_s \quad \mbox{and}\quad \mtheta\tr (\mP_t \otimes \mI_{K_s}) \mtheta \leq C_t. \eea Using Lagrange multiplier arguments (Ruppert, Wand \& Carroll, 2003, p.~65) this is equivalent to the penalized estimation problem \bea\label{penest} \min_{\btheta} \{\|\by - \bD \btheta\|^2 + \lambda_s \btheta\tr(\bm{I}_{K_t} \otimes \bP_s) \btheta + \lambda_t \btheta\tr(\mP_t \otimes \mI_{K_s}) \btheta\} \eea for some $\lambda_s, \lambda_t \geq 0$ that depend on $C_s, C_t$. Note that on a given hyperplane $\mathcal{H}_f$, minimizing the penalty $\btheta\tr (\bP_t \otimes \bm{I}_{K_s}) \btheta$ has \todo{no projection effect}.

This characterization, which only requires checking of design matrix and penalty
in $s$-direction and not for the full model, carries over to the penalized
maximum likelihood or least squares estimation problem
\bea\label{penest}
\min_{\btheta} \{\|\by - \bD \btheta\|^2 + \lambda_s \btheta\tr(\bm{I}_{K_t} \otimes \bP_s) \btheta + \lambda_t \btheta\tr(\mP_t \otimes \mI_{K_s}) \btheta\}
\eea for some $\lambda_s> 0, \lambda_t \geq 0$.

%Unicity of the solution can also be seen by examining the penalized least squares solution \bea\nn \hat \btheta = (\bD\tr \bD + \lambda_s (\bm{I}_{K_t} \otimes \bP_s) + \lambda_t (\bP_t \otimes \bm{I}_{K_s}))\1 \bD\tr \by, \eea with $ (\bD\tr \bD + \lambda_s (\bm{I}_{K_t} \otimes \bP_s) + \lambda_t (\bP_t \otimes \bm{I}_{K_s}))$ invertible if and only if there is no kernel overlap.

\begin{proposition} \label{invert}
Assume that $\bm B_t$ is of full rank $K_t$, that $\rg(\mP_t) < K_t$ and that $\lambda_s >0, \lambda_t \geq 0$. Then, there is a unique penalized least squares solution
for \eqref{penest} if and only if $\kernel(\mD_s\tr
\mD_s ) \cap \kernel(\bP_s) = \{\bm 0\}$.
\end{proposition}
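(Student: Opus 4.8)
The plan is to read \eqref{penest} as an unconstrained convex quadratic program, reduce its well-posedness to a kernel condition on the (half-)Hessian, and then settle that condition with the same Kronecker/$\ovec$ bookkeeping already used for Proposition \ref{unique}.

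First I would write the objective in \eqref{penest} as $\|\by - \bD\btheta\|^2 + \btheta\tr\bP\btheta$ with $\bP = \lambda_s(\mI_{K_t}\otimes\bP_s) + \lambda_t(\mP_t\otimes\mI_{K_s})$, and record the normal equations $(\bD\tr\bD + \bP)\btheta = \bD\tr\by$. Since the objective is convex, its set of minimisers coincides with the affine solution set of these normal equations. That set is nonempty because $\bD\tr\by \in \image(\bD\tr) = \image(\bD\tr\bD) \subseteq \image(\bD\tr\bD + \bP)$, the last inclusion holding since $\bD\tr\bD$ and $\bP$ are positive semi-definite. Hence the minimiser is unique if and only if $\bD\tr\bD + \bP$ is nonsingular, i.e. $\kernel(\bD\tr\bD + \bP) = \{\bm 0\}$. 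Because the three summands $\bD\tr\bD$, $\lambda_s(\mI_{K_t}\otimes\bP_s)$ and $\lambda_t(\mP_t\otimes\mI_{K_s})$ are all positive semi-definite, $\btheta\tr(\bD\tr\bD + \bP)\btheta = 0$ forces each quadratic form to vanish separately, so $\kernel(\bD\tr\bD + \bP) = \kernel(\bD) \cap \kernel(\mI_{K_t}\otimes\bP_s) \cap \kernel(\mP_t\otimes\mI_{K_s})$ (the third set being all of $\real^{K_sK_t}$ when $\lambda_t = 0$); here $\lambda_s > 0$ is what lets me drop the scalar from the middle kernel.

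It then remains to show that this triple intersection is trivial if and only if $\kernel(\mD_s\tr\mD_s)\cap\kernel(\bP_s) = \{\bm 0\}$ --- exactly the linear-algebra step inside Proposition \ref{unique}, which I would either invoke or rerun. For the ``only if'' direction I argue by contraposition: given $\bm 0 \neq \mv \in \kernel(\mD_s\tr\mD_s)\cap\kernel(\bP_s)$ and, using $\rg(\mP_t) < K_t$, some $\bm 0 \neq \mb \in \kernel(\mP_t)$, the reshaped vector $\btheta = \ovec(\mv\mb\tr) = \mb\otimes\mv$ is annihilated by $\bD = \mB_t\otimes\mD_s$ (as $\mD_s\mv=\bm 0$), by $\mI_{K_t}\otimes\bP_s$ (as $\bP_s\mv=\bm 0$) and by $\mP_t\otimes\mI_{K_s}$ (as $\mP_t\mb=\bm 0$), so it lies in the triple intersection. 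For the ``if'' direction, again by contraposition: if $\bm 0 \neq \btheta = \ovec(\mTheta)$ lies in the triple intersection, then $(\mI_{K_t}\otimes\bP_s)\ovec(\mTheta) = \ovec(\bP_s\mTheta) = \bm 0$ gives $\bP_s\mTheta = \bm 0$, and $\bD\btheta = \ovec(\mD_s\mTheta\mB_t\tr) = \bm 0$ together with the full column rank of $\mB_t$ (so $\mB_t\tr$ is right-invertible) gives $\mD_s\mTheta = \bm 0$; any nonzero column of $\mTheta$ then belongs to $\kernel(\mD_s)\cap\kernel(\bP_s) = \kernel(\mD_s\tr\mD_s)\cap\kernel(\bP_s)$.

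I do not expect a genuine obstacle, since the content is essentially that of Proposition \ref{unique}. The two places that need care are: (i) arguing that uniqueness, not merely existence, of the minimiser is equivalent to nonsingularity of $\bD\tr\bD + \bP$ --- this rests on $\bD\tr\by \in \image(\bD\tr\bD + \bP)$, so that in the singular case one obtains a whole affine subspace of minimisers rather than an empty solution set; and (ii) keeping the $\ovec$/Kronecker conventions straight ($\bD = \mB_t\otimes\mD_s$ acting on $\ovec(\mTheta)$ with $\mTheta$ of size $K_s\times K_t$) so that ``column of $\mTheta$'' is consistently the right object. As in the remark following Proposition \ref{unique}, the hypothesis $\rg(\mP_t) < K_t$ enters only in the ``only if'' direction; for a full-rank $\mP_t$ the statement weakens to an implication.
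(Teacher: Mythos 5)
Your proof is correct, and it reaches the result by a somewhat different route than the paper in the key direction. Both arguments start from the same reduction — uniqueness of the penalized least squares solution is equivalent to nonsingularity of $\bD\tr\bD + \bP$ — and both prove non-uniqueness under a kernel overlap by exactly the same construction, namely exhibiting $\bm u_t \otimes \bm u_s$ (your $\mb \otimes \mv$) in the kernel of the penalized Hessian, using $\rg(\mP_t) < K_t$ precisely and only there, as you correctly note. The difference lies in the other direction: the paper passes to the singular value decomposition of $\bD$, writes $\mb = \mU\tr \bm x$, and eliminates coordinates blockwise ($\mSigma_{s+}$-part first, then the $\mU_{s0}$-part via the overlap condition), whereas you avoid the SVD entirely by using that the kernel of a sum of positive semi-definite matrices is the intersection of their kernels, then reshaping $\btheta = \ovec(\mTheta)$ and exploiting the right-invertibility of $\mB_t\tr$ to conclude $\mD_s \mTheta = \bm 0$ and $\bP_s \mTheta = \bm 0$, so that a nonzero column of $\mTheta$ would witness a kernel overlap. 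Your version is more elementary and makes the role of the Kronecker/$\ovec$ structure transparent; the paper's version ties directly into the SVD machinery already set up for Proposition \ref{unique} and the explicit projection formula there. You also spell out a step the paper asserts without comment, namely that a minimiser always exists (via $\bD\tr\by \in \image(\bD\tr\bD) \subseteq \image(\bD\tr\bD + \bP)$), so that singularity of the Hessian yields a whole affine set of minimisers rather than possibly none — a worthwhile clarification of why uniqueness is genuinely equivalent to invertibility.
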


\begin{proof} See Appendix \ref{sec:proof-invert}.
\end{proof}
Proposition \ref{invert} gives a criterion for the uniqueness of the penalized
least squares solution. We can show how the penalty achieves this uniqueness by
writing \eqref{penest} as a nested minimization problem. Here, the outer
minimization finds the $\mf = \bD\btheta$ with optimal fit to the data, and the
inner minimization minimizes the penalty term over $\mathcal H_f$ to obtain the
smoothest solution for a given level of residual variation.
\begin{align*}
&\min_{\btheta} \{\|\by - \bD \btheta\|^2 + \lambda_s \btheta\tr (\bm{I}_{K_t} \otimes \bP_s) \btheta + \lambda_t \btheta\tr (\bP_t \otimes \bm{I}_{K_s}) \btheta \} \\
=&\min_{\mf \in \image(\bD)} \min_{\btheta} \{\|\by - \bm f\|^2 + \btheta\tr \mP \btheta \qquad \text{s.t.} \quad \bD\btheta = \mf\} \\
=&\min_{\mf\in \image(\bD)} \{\|\by - \bD \btheta_{f}\|^2 + \btheta_{f}\tr \mP \btheta_{f} \} \\
%=&\min_{\bm c\in \image(\bD)} \{\|\by - \bm c\|^2 + \lambda \btheta_c\tr \bP \btheta_c \} \\
=&\min_{\mv_+\in \real^d} \{\|\by - \bV_+ \bSigma_+ \mv_+\|^2 +
\mv_+\tr\mU_+\tr \mH\tr \mP \mH \mU_+ \mv_+
\end{align*}
where $\btheta_f = \mH\mU_+ \mv_+$ for a given $\mf$ is uniquely defined as in \eqref{thetac}, with
$\bm v_+ = \bm\Sigma_+\1\bm V_+\tr \bm f$ and
$\mH = (\mI_{K_sK_t} - \bU_0 (\bU_0\tr \bP \bU_0)^{-1} \bU_0\tr\bP )$,
 if $\kernel(\mD_s\tr
\mD_s ) \cap \kernel(\bP_s) = \{\bm 0\}$.

As $ \mV_+ \bm \bSigma_+$ is a matrix of full column rank $d$, %and if $\kernel(\mD_s\tr \mD_s ) \cap \kernel(\bP_s) = \{\bm 0\}$,
this minimization problem has a unique solution that, for given $\lambda_s, \lambda_t$, balances the fit to the data and smoothness.

To summarize, in the case of no kernel overlap, i.e., $\kernel(\mD_s\tr
\mD_s ) \cap \kernel(\bP_s)= \{\bm 0\}$, we obtain a weaker form of identifiability
even when the design matrix $\bm D$ is not of full rank,
which guarantees that there is a unique smoothest representative on any
hyperplane of parameters giving the same conditional distribution for $\bm y$.
Then, there will also be a unique solution to the penalized estimation problem,
which is the smoothest representative on the hyperplane of possible solutions
with equally good fit.

%Intuitively, coefficient vectors $\mtheta$ parameterizing coefficient functions $\beta(s,t)$ are penalized, unless they parameterize functions in the space $\mathcal P^\bot$ of non-penalized functions. For the usual derivative-based penalties, these will typically be low-order polynomials. This penalization also acts on coefficient vectors in the kernel of rank-deficient design matrices, i.e., if $\rg(\mD) < K_s K_t$. We still obtain a weaker form of identifiability if there is no overlap between the function spaces implied by these two kernels.

In practice, $\lambda_s$ and $\lambda_t$ are estimated from the data.
We here do not investigate the more complex case when $\lambda_s, \lambda_t$ are
not fixed. It should also be noted that
$(\bD\tr \bD + \lambda_s (\bm{I}_{K_t} \otimes \bP_s) + \lambda_t (\bP_t \otimes \bm{I}_{K_s}))$
may still be close to singular even in cases of no kernel overlap if smoothing
parameters are very small, with corresponding reduced stability in estimation.

\subsection{Diagnostics, practical recommendations and countermeasures}\label{sec:issue:diag}

In order to safeguard against misleading coefficient estimates in practical
applications of functional regression, it is necessary 1) to develop empirical
criteria for diagnosing problematic data settings in which the coefficient
function is not identifiable based on the available data, or where only  the
penalty ensures unique estimates, 2) to avoid pre-processing protocols and tuning parameters that
increase the likelihood of identifiability issues and
3) to develop improved algorithms for estimating function-on-function effects that are
less prone to severely misleading estimates in problematic settings.

\paragraph{Diagnostics}

We are interested in identifying settings with \emph{simple} non-identifiability in which only the penalty term guarantees the existence of a unique solution. Following
Proposition \ref{sonjaproposition1}, the most direct approach to do so is to
compute the condition number of $\mD_s\tr \mD_s=\left(\mX\mW\mB_s\right)^T\mX\mW\mB_s$ and choose a suitable  cut-off ($10^6$, in the following) for numeric rank deficiency.

In addition, propositions \ref{unique} and \ref{invert} indicate that a measure of the degree of overlap between the spans of  $\kernel(\mD_s\tr \mD_s )$ and  $\kernel(\bP_s)$ can be used to detect \emph{persistent} non-identifiability that remains despite the penalization. In our empirical evaluation of such measures, we found that a measure for the distance between the spans of two matrices introduced in \citet{LarssonVillani2001}, when modified  for our setting, showed the most promise as the resulting measure is free of tuning parameters and can be computed quickly from the data.

In particular, we  modify the original definition of Larsson-Villani in order to
 accommodate two matrices of unequal column numbers. We then define
the amount of overlap %or Larsson-Villani distance
$\bigcap_{LV}$  between the span of  two matrices
$\mA \in \real^{n \times p_A}$, $\mB \in \real^{n \times p_B}$, $n > p_A, p_B$, by
\begin{align*}
\bigcap\nolimits_{LV}(\mA, \mB) =  \operatorname{trace}(\mV_B\tr \mV_A \mV_A\tr \mV_B).
%d_{LV}(\mA,\mB) &=  \left(\frac{\min(p_A, p_B) - \operatorname{trace}(\mV_A\tr\mV_B\mV_B\tr\mV_A))}{\min(p_A, p_B, n-p_A, n-p_B)}\right)^{1/2}.
 %=  \left(\frac{\min(p_A, p_B) - \operatorname{trace}(\mB\tr\mA\mA\tr\mB))}{\min(p_A, p_B, n-p_A, n-p_B)}\right)^{1/2}.
\end{align*}
\iffalse
where without loss of generality $p_A \leq p_B$, as
\begin{align*}
d_{LV}(\mA,\mB) = d_{LV}(\mB,\mA) &=
\left(\min(p_A, n-p_B)^{-1}
\operatorname{trace}\left(\mV_A^T{\mV_B}^\bot({\mV_B}^\bot)^T\mV_A \right)\right)^{1/2}.
\end{align*}
\fi
Here, $\mV_Z$ is a matrix containing the left singular vectors of the matrix
$\mZ$ and is thus an orthogonal matrix spanning the same column space as $\mZ$, %, and the $n \times (n-p_Z)$ matrix ${\mV_Z}^\bot$ is the orthonormal  complement of $\mV_Z$,
$\mZ \in \{\mA, \mB\}$.
It is easy to see that this measure %the Larsson-Villani distance
 is symmetric, $\bigcap_{LV}(\mA,\mB) = \bigcap_{LV}(\mB,\mA)$.
Similarly to Theorem 2 in \citet{LarssonVillani2001}, one can also show that $\bigcap_{LV}(\mA,\mB) \in [0,\min(p_A, p_B)]$,
with the overlap assuming its maximum of $\min(p_A, p_B)$
 iff $\mA \in\image(\mB)$ or
$\mB \in \image(\mA)$ and its minimum of~0 iff $\mA \in \image(\mB_\bot)$ %, $\mB \in \image(\mA^\bot)$
or
$\mB_\bot \in \image(\mA)$, %or  $\image(\mA^\bot) \in \mB$.
where the $n \times (n-p_Z)$ matrix ${\mZ}_\bot$ is the orthonormal  complement of $\mZ$, $\mZ \in \{\mA, \mB\}$.

%\begin{align*} d_{LV}(\mA,\mB) &=  \left(\frac{\min(p_A, p_B) - \operatorname{trace}(\mV_A\tr\mV_B\mV_B\tr\mV_A))}{\min(p_A, p_B, n-p_A, n-p_B)}\right)^{1/2}.  \end{align*}

To measure the degree of overlap between the kernels of $\bD_s\tr\bD_s$ and of $\bP_s$, we could use $\bigcap_{LV}((\bD_s\tr\bD_s)_\bot, \bP_{s\bot})$,
as the span of $(\bD_s\tr\bD_s)_\bot$ corresponds to the kernel of $\bD_s\tr\bD_s$ and the span of $\bP_{s\bot}$ corresponds to the kernel of $\bP_s$.
%Here,  ${\mZ}^\bot$ denotes the orthonormal  complement of $\mZ$, $\mZ \in  \{ \bD_s\tr\bD_s, \bP_s\}$.
In the following, we will however use
\begin{align}
\bigcap\nolimits_{X_\bot \mathcal{P}_\bot} = \bigcap\nolimits_{LV}((\mX^T\mX)_\bot, \mW \mB_s \mP_{s\bot})
= \bigcap\nolimits_{LV}((\mX\tr)_\bot, \mW \mB_s \mP_{s\bot}),
\label{eq:dxp}
\end{align}
as this choice obtained slightly better sensitivity and specificity for the detection of problematic settings in our simulations in Section \ref{sec:sim}. Moreover, one can easily show that
\begin{align*}
\kernel(\bD_s\tr\bD_s) \cap \kernel(\bP_s) =  \{\bm{0}\}
 \quad \Leftrightarrow \quad
\kernel(\bX\tr \bX) \cap \{\bW\bB_s \bm{x} \ |\  \bm{x} \in \kernel(\bP_s)\}=  \{\bm{0}\}
\end{align*}
such that the two formulations address the same question.
For $\bW = \bm{I}_S$, this measure has the interpretation of   the overlap  % Larsson-Villani distance
between the empirical null-space of the
observed $X(s)$ process or $\kernel(K^X)$ and $\mathcal{P}_{s\bot}$, the space of functions not penalized by the penalty defined by $\bP_s$,  evaluated on the grid given by $\ms$.
It can be determined quickly and accurately  before the model is fit. % using  rank-revealing QR decompositions of $\mX^T$ and $\mB_s (\mP_s^\bot)$.
Problematic cases are indicated by overlap measures $\geq 1$, as this is indicative of an at least one-dimensional sub-space of functions contained in
the kernel overlap.
%both  the the empirical null-space of the  observed $X(s)$ process and $\mathcal{P}_s^\bot$.

% Alternatively, we can
% compare the effective rank $\tilde M$ of the observed $X(s)$ to $K_s$,
% where we define the effective rank
% \begin{align}
% \tilde M = \min\left\{m: \sum^m_{j=1} \hat\nu_j \geq 0.95\sum^{\min(n,S)}_{j=1} \hat\nu_j \right\}
% \label{eq:efrk}
% \end{align}
% of $n$ centered observations of a functional covariate as the smallest number of
% leading eigenvalues $\hat\nu$ of its empirical covariance that together account
% for more than 95\% of the observed variability. Results from the simulation
% studies in Section \ref{sec:sim} indicate that the latter approach is preferable.

\paragraph{Practical recommendations}
The theoretical results suggest several recommendations for pre-processing of functional covariates and choice of the penalty in practice.

\begin{enumerate}

\item Pre-smoothing of functional covariates is commonly done to remove measurement error and/or obtain functions on a common grid \citep[e.g.~][]{James2002, RamsaySilverman2005,Goldsmith:etal:2011}. If the resulting (effective) rank of the smoothed covariate process drops below $K_s$, this will lead to models that are only identifiable through the penalty term. We thus recommend to use a sufficiently large number of FPCs and/or spline basis functions if such pre-processing is required.

As a peculiar consequence of this point  it may be preferable in some cases to accept a small amount of measurement error-induced attenuation in $\hat\beta(s,t)$ based on noisy, unprocessed $X(s)$ in order to avoid a potentially much larger non-identifiability-induced error in $\hat\beta(s,t)$ based on low-rank, pre-processed $X(s)$.

\item
Curve-wise centering of  functional covariates such that $\sum^S_{l=1} X_i(s_l)=0$ for all $i$ is sometimes used e.g.~in the context of spectroscopy data to remove the optical offset \citep[c.f.][]{Fuchs:etal:2014}. Then, constant functions lie in the kernel of $K^X$, $\kernel(K^X)$. This is not recommended if a penalty is used that does not penalize constant functions (as most difference or derivative-based penalties do) to avoid non-identifiability.

\item Penalties with larger null-spaces  increase the  likelihood of a kernel overlap and resulting
non-identifiability problems. For difference or derivative-based penalties, for example, a penalty penalizing deviations from constant functions (first order differences or derivatives) would thus be preferable in this sense to higher-order differences/derivatives. Constant coefficient functions, which then span the penalty null-space, correspond to models with the mean over the functions as covariate - which are often used by practitioners - and thus also lend themselves to intuitive interpretations.

In particular, for penalties where constant coefficient functions span the penalty null\-space, it is straightforward to see that only X-processes that are centered curve-wise will result in a kernel overlap (unless smoothing parameters are estimated to be very small). Unless curve-wise centering is performed as discussed in 2., such processes will typically occur rarely and using first-order difference/derivative penalties should thus guard against many if not most serious identifiability issues in practice.

\end{enumerate}

\paragraph{Countermeasures}

The third point above suggests modifications of the penalty null-space to avoid non-identifiability caused by potential overlap between the penalty null-space $\mathcal{P}_{s\bot}$
and $\kernel(K^X)$. Alternatively, the estimated coefficient surface can be constrained to be orthogonal to functions in the overlap of $\mathcal{P}_{s\bot}$ and $\kernel(K^X)$. We describe three approaches using penalties with empty null-spaces and one constraint-based approach; a systematic comparison of their performance is given in Section \ref{sec:sim:modpen}.
\begin{enumerate}
\item The simplest approach is the use of a simple ridge penalty
$\mP_s=\mI_{K_s}$. However, the resulting estimates will typically not have good
smoothness properties as the ridge penalty is not a roughness penalty in the
conventional sense and its bias towards small absolute values of $\beta(s,t)$
may increase estimation error, cf.~Figures \ref{fig:cca_6_plot},
\ref{fig:sim:identExample} and \ref{fig:mseBeta2}.
\item A second approach uses a modified marginal penalty matrix without null-space
along the lines of the so-called ``shrinkage approach'' described in
\citet{MarraWood2011} and originally developed for the purpose of variable
selection in generalized additive models. \citet{MarraWood2011} replace the marginal penalty
$\mP_s=\mGamma \diag(\rho_1,\dots,\rho_{K_s})\mGamma^T$, with eigenvectors contained  in
$\mGamma$, eigenvalues $\rho_1, \dots, \rho_{K_s}$ and rank $\tilde K_s < K_s$,
by a full rank marginal penalty
\begin{align*}
\tilde\mP_s &= \mGamma \diag(\rho_1,\dots,\rho_{\tilde K_s},
\epsilon \rho_{\tilde K_s}, \dots, \epsilon \rho_{\tilde K_s})\mGamma^T.
\end{align*}
This substitutes the zero eigenvalues $\rho_{\tilde K_s + 1}, \dots, \rho_{K_s}$
with  $\epsilon \rho_{\tilde K_s}$ for all $k = \tilde K_s + 1,\dots,K_s$ using
$0 < \epsilon \ll 1$,
and thereby adds a small amount of penalization to parameter vectors in the null
space of the original penalty. In the following, we will refer to such a modified penalty  as a full-rank penalty. By imposing a small degree of regularization on
functions in $\mathcal P_{s\bot}$ one can in principle preserve the attractive smoothing
properties of the original penalty while still avoiding large artifacts due to
non-identifiability resulting from a kernel overlap. We use $\epsilon=0.1$ as suggested in \citet{MarraWood2011}, but results show that this choice is not always effective in removing artifacts if the estimated smoothing
parameter is very small and overall penalization of the fit is weak.
\item In a similar effort to avoid spurious estimates in scalar-on-function
regression, \citet[][their eq. (16)]{James2005} suggested using the empirical
FPCs of $X(s)$ scaled by their inverse eigenvalues as a penalty. This
penalizes coefficient functions with large variability in
directions in which $X(s)$ varies
very little or not at all (i.e., in $\kernel(K^X)$). We adapted this
approach to our function-on-function setting by replacing the conventional
difference operator based B-spline penalty matrix with
\begin{align*}
\mP_s &= \mB_s^T  \sum^{\min(N,S)}_{m=1} \hat\nu_m^{-1}\diag\left(\mw \cdot \hat\phi_m(\ms)^2\right) \mB_s
%\left[\sum^{\min(n,S)}_{m=1} \mw^T \left(\mB_{s,\cdot i}\cdot \mB_{s,\cdot j} \cdot (\hat\nu^{-1/2}_m\hat\phi_m(\ms))^2\right)\right]\inds{i=1, \dots, K_s}{j=1, \dots, K_s}
\end{align*}

%}

with estimated FPCs and eigenvalues $\hat\phi_m(s)$ and $\hat\nu_m$, $m=1, \dots, \min(N,S)$.
% , and the $j^{\text{th}}$ column of $\mB_s$ denoted by $\mB_{s,\cdot j}$.
For a given $t_0$, this penalty matrix approximates the marginal penalty term
\mbox{$\sum^{\min(N,S)}_{m=1} \int (\hat\nu^{-1/2}_m\hat\phi_m(s)\beta(s, t_0) )^2 ds$}
suggested by \citet{James2005}.

\iffalse
\fs{ --------- fussnote ------------- }

$f_m(s) = \phi_m(s)/\sqrt{\nu_m}$;

\begin{align*}
\sum_m \int \nu_m^{-1}(\beta(s,t_0)\phi_m(s))^2ds &=  \sum_m \int (\beta(s,t_0)f_m(s))^2ds \\
&\approx \sum_m \mw^T \left[(f_m(s_h)\mB_{s,h\cdot}\mtheta)^2\right]\inds{h=1,\dots,S}{}\\
%&=  \sum_m \mw^T  \mtheta^T\mB_s^T \diag(f_m(\ms)^2) \mB_s \mtheta\\
&=  \sum_m  \mtheta^T\mB_s^T \diag(\mw \cdot f_m(\ms)^2) \mB_s \mtheta
\end{align*}

\fs{ ------------------------------------- }
\fi

The empirical $\hat\phi_m(s)$ and $\hat\nu_m$ have to be estimated from a singular value decomposition of $\mX$. It is unclear, however, how to compute the inverse eigenvalues if $\mX$ is of low rank, i.e., if some of the $\hat\nu_m$ are (numerically) zero, which is of course precisely the setting in which this penalty might yield more stable estimates.
In our experiments, we tried replacement of the zero eigenvalues with the smallest non-zero eigenvalue and a variety of other replacement schemes, but results from this approach seem to be fairly sensitive to both the chosen replacement scheme for zero eigenvalues and to the estimated FPCs themselves.

\item Instead of the augmented or alternative penalization schemes for components of the coefficient surface that lie in the overlap $\kernel(K^X) \cap \mathcal{P}_{s\bot}$, we can also directly constrain these components to be zero. As these components are not estimable from the data, such constraints are a plausible default but need to be taken into account for the interpretation of the estimated surface. To implement them, we compute a basis of the overlap and constrain the coefficient surface evaluated on the observed grid to be orthogonal to vectors in the span of that basis. Let $\mX^\top_\bot = {(\mX\tr)}_\bot$ and the $S \times K_s$ matrix $\mB_s$ containing the marginal basis functions over $\mathcal S$ evaluated on $\bm{s}$. A basis spanning the overlap is then defined by the left singular vectors $\mV_{Cs+}$ of the matrix ${\mX^\top_\bot} \left( {(\mX^\top_\bot)\tr} \mX^\top_\bot \right)^{-1}(\mX^\top_\bot)^\top\diag(\mw)\mB_s\bP_{s\bot}$ that have positive singular values. For intuition, consider that if $\mw = \bm{1}$, the expression above projects $\mB_s\bP_{s\bot}$, i.e., a basis for $\mathcal{P}_{s\bot}$, into the span of $\mX^\top_\bot$, i.e, a basis for $\kernel(K^X)$. This yields a basis for the intersection of the spans of $\mB_s\bP_{s\bot}$ and $\mX^\top_\bot$. The expression above is cheap to compute in a numerically stable way using the QR decomposition of $\mX^\top_\bot$. We then estimate $\mTheta$ under the constraints ${\mV_{Cs+}\tr} \diag(\mw)\mB_s \mTheta \stackrel{!}{=} \bm{0}$. To give an example, the constrained coefficient surface estimate for a curve-wise centered functional covariate that does not contain any constant components will be centered around zero and not be offset in any direction if the penalty null-space includes constant functions. Note that, different from the augmented or alternative penalties described above, these constraints are empty if $\kernel(K^X) \cap \mathcal{P}_{s\bot}$ is empty and penalties with constraints then reduce to the usual penalties without constraints. The constraint definition above is quite general and can be used for any basis with a quadratic penalty, it is not limited to the difference penalties we focus on. Specifically, it will also be applicable for derivative based penalties as well as some of the PDE-based penalties introduced by \citet{RamsayHooker2007} whenever identifiability issues arise. By default, the \code{pffr} function uses the diagnostic criterion \eqref{eq:dxp} combined with a check for a rank-deficient design matrix to determine the presence of persistent non-identifiability. If persistent non-identifiability is found, a corresponding warning is issued and the constraints developed here are applied to the fit. Interpretation of the constrained coefficient surface estimates requires careful attention. For example, if constrained surface estimates are centered around zero because $\kernel(K^X) \cap \mathcal{P}_{s\bot}$ contains constant functions, the signs of different regions of $\hat\beta(s,t)$ are not interpretable due to the estimated absolute level of $\hat\beta(s,t)$ being essentially arbitrary.
\end{enumerate}

\section{Simulation study}\label{sec:sim}  
This section presents results on the practical consequences of the theoretical
development in Section~\ref{sec:issue}. Specifically, we
investigate the performance of the tensor product spline-based approach given in
Section \ref{sec:model} on artificial data of varying complexity and noise levels
in terms of estimation accuracy and use the simulation results to validate the
diagnostics for problematic settings we have developed. Subsequently, we present
results for the modified full-rank penalties of \citet{MarraWood2011},  the
FPC-based penalty of \citet{James2005} and the constrained estimates
described in Section~\ref{sec:issue:diag}.
All models were fitted with the \code{pffr()} function
available in the \pkg{refund} package, which estimates the smoothing parameters using
restricted maximum likelihood (REML).
% Code to reproduce this section is available from the first author's
% website\footnote{\url{http://www.statistik.lmu.de/~scheipl/research.html}}.

\subsection{Simulation setup}
We simulate data from data generating process~\eqref{eq:model}, with $n=50$ subjects, $\mathcal T= \mathcal S = [0,1]$
and $S=100$ grid-points for $X_i(s) = \sum^M_{m=1} \xi_{im} \phi_m(s)$. The effect surface $\beta(s,t)$ is estimated
using tensor product cubic B-splines. We set $K_t=10$ and use a marginal first order difference penalty for the $t$-direction.
Test runs showed that results are insensitive to the number of grid-points for the response; we used $T=50$
grid-points for $Y(t)$.
Twenty replications were simulated for all sensible combinations of the
following parameters (144000 replicates in total). For the fitting algorithm,
we vary
\begin{itemize}
\item the \textbf{marginal roughness penalties} for the spline coefficients:
either second order difference penalties (``$\Delta^2$'') or first order
difference penalties (``$\Delta^1$'').
For the second order differences, coefficient vectors in the penalty's null-space $\kernel(\mP(\lambda_s,\lambda_t))$
parameterize surfaces that are constant or linear in both directions.
For the first order differences, coefficient vectors in
$\kernel(\mP(\lambda_s,\lambda_t))$ parameterize constant surfaces.
Note that rank  deficiencies can increase if associated smoothing parameters become sufficiently small. In that case, the
penalty null-space effectively becomes the entire span of the associated basis
functions.
\item and the \textbf{number of basis functions} over $\mathcal S$: $K_s \in
\{5, 8, 12\}$.
\end{itemize}
For the data generating process, we vary the following parameters:
\begin{itemize}
\item \textbf{number of eigenfunctions for the $X(s)$-trajectories} with
non-zero eigenvalues:\\  $M \in \{3, 5, 8, 12, 20\}$. This means we have settings with $M\leq K_s$ and $M>K_s$ for
most $M$. Note that the effective numerical rank of a simulated $\mX$
can be (much) lower than $M$ depending on the speed with which the eigenvalues decrease.
\item  \textbf{signal-to-noise ratio}: $\snr_\eps =
\tfrac{\sd(\int_{\mathcal{S}} X_{i}(s)\beta(s,t)ds))}{\sd(\eps_{it})} \in \{2, 10, 1000\}$, where $\sd(x)$
is the empirical standard deviation of $x$. This corresponds to high
and intermediate noise levels for realistic scenarios as well as settings with
almost no noise to check the theoretical properties.
\item \textbf{FPC systems} for $X_i(s) = \sum^M_{m=1}
\phi_m(s)\xi_{im}$ with $\xi_{im} \sim N(0, \nu_m)$, with different patterns of decrease in the eigenvalues $\nu_m$
of the covariance operators: either a linear decrease ($\nu_m=\tfrac{M+1-m}{M}$) or
an exponential decrease ($\nu_m=\exp\left(-\tfrac{m-1}{2}\right)$).
Some of these processes are constructed so that their covariance
operator kernels $\kernel(K^X)$ include functions in the penalty null-space $\mathcal{P}_{s\bot}$.
\begin{itemize}
\item \emph{Poly}: eigenfunctions are orthogonal polynomials of degree 0 to
$M-1$ with  linear or exponentially decreasing eigenvalues
(\emph{Poly,Lin} and \emph{Poly,Exp}, respectively).
For \emph{Poly}, $\kernel(K^X)$ is disjunct from $\mathcal{P}_{s\bot}$, since
the first and second eigenfunctions are constant and linear
polynomials.
\item \emph{Fourier}: eigenfunctions are those of a standard Fourier basis.
Although a complete Fourier basis is a basis for all
square-integrable functions, in practice the kernel $\kernel(K^X)$ of a truncated Fourier basis
contains functions that are very close to the constant since no linear combination of a finite set of
Fourier basis functions yields an exactly constant function, so
$\kernel(K^X)$ is not disjunct from $\mathcal{P}_{s\bot}$.
We used this basis with constant ($\nu_m\equiv 1$ for
\emph{Fourier,Const}) or exponentially decreasing (for \emph{Fourier,Exp})
eigenvalues $\nu_m$.
\item \emph{Wiener}: eigenfunctions and eigenvalues are those of the standard Wiener process on $[0,1]$:
$\phi_m(s) = \sqrt{2}\sin\left(\pi(m - 0.5) s\right)$;
$\nu_m= \left(\tfrac{\pi}{2}(2m + 1)\right)^{-2}$.  The $\kernel(K^X)$
is close to $\mathcal{P}_{s\bot}$ in this case as no linear combination of a finite set of
these basis functions yields an exactly constant or linear function.
\item \emph{BrownBridge}: eigenfunctions and eigenvalues are those of the
standard Brownian bridge on $[0,1]$:
$\phi_m(s) = \sqrt{2}\sin\left(\pi  m  s\right)$;
$\nu_m= \tfrac{1}{\pi m}$. The $\kernel(K^X)$
is close to $\mathcal{P}_{s\bot}$ in this case as no linear combination of a finite set of
these basis functions yields an exactly constant or linear function.
\item \emph{Poly(1+), Poly(2+), Poly(-1)}: eigenfunctions are orthogonal
polynomials of degree 1 [2] to $M$ [$M+1$] for \emph{Poly(1+)}
[\emph{Poly(2+)}],  so that $\kernel(K^X)$ includes the (complete) null-space
of the rank-deficient penalties, i.e., the constant [and linear] functions.
\emph{Poly(-1)} has polynomial eigenfunctions of degree $\{0, 2, 3, 4, \ldots, M+1\}$ so
that $\kernel(K^X)$ overlaps the null-space of the second differences penalty
but not the first differences penalty. All three processes are associated with linearly decreasing $\nu_m$.
\end{itemize}
From top to bottom, these processes become
increasingly more ``antagonistic'' in the sense that 1) the kernels of these
eigenfunction systems move increasingly closer to the kernels of the
penalties we consider and
2) more quickly decreasing eigenvalues result in lower effective rank of
the observed $X(s)$.
\item \textbf{coefficient functions} $\beta(\ms, \mt)=\mB_s \mTheta \mB_t^T$ are drawn
randomly for each setting. The associated coefficients are drawn as
$\ovec(\mTheta) \sim N\left(\bm{0}, \left(0.1 \mI + \mP(\lambda_s,\lambda_t)\right)^{-1}\right)$,
where $\mP(\lambda_s,\lambda_t)$ is a first order difference tensor penalty
matrix.
\begin{itemize}
\item The marginal B-spline bases $B_s$ and $B_t$ have either $4$ or  $8$
basis functions for each direction.
\item $\lambda_s=\lambda_t$ are either $0.1$ or $1$.
\end{itemize}
This generates  coefficient surfaces of varying complexity and roughness. We do not fit models where the basis
used to generate $\beta(s, t)$ is larger than that used to estimate
$\beta(s, t)$ since that could introduce a distracting approximation error not
relevant to the issues at hand.
\end{itemize}
In order to make results comparable across the different settings, we
use the relative integrated mean squared errors
rIMSE$_\beta=\frac{\int(\beta(s,t)-\hat\beta(s,t))^2dsdt}{\int(\beta(s,t))^2dsdt}$
and rIMSE$_Y=\frac{1}{N}\sum^N_{i=1} \frac{\int(\hat Y_i(t)-
\EV(Y_i(t)))^2dt}{\int(Y_i(t) - \overline{Y_i(t)})^2dt}$, where $\overline{Y_i(t)}$
is the mean of $Y_i(t)$ over $t$.

%
% Appendix~\ref{sec:examples} shows the coefficient functions,  realizations from the $X(s)$-processes and  examples of generated data $Y(t)$ along with the resulting estimates for some illustrative settings.
 % 20160209

\subsection{Results}

\paragraph{Identifiability}

The estimation accuracy for $\hat Y(t)$ (not shown) is excellent across the
board even for the very noisy settings, with 90\%  of relative integrated mean
square errors below 0.01 and a median of 0.00051. Estimation accuracy for
$\hat\beta(s,t)$, however, varies wildly over a range of 18 magnitudes between \ensuremath{1.3\times 10^{-8}}
and \ensuremath{2.2\times 10^{10}}. Further analysis shows that the simulation study design succeeds in creating the identifiability issues described by the results in
Section \ref{sec:issue:eqrep}. To quantify the severity of identifiability issues,
we compute rank correlations between rIMSE$_\beta$ and rIMSE$_Y$ over the 20
replicates of each simulation setting. As expected, we observe low or
even negative correlations mostly for settings in which $\mD_s$ is rank-deficient.
This effect increases both for lower signal-to-noise ratios and for more complex true shapes of $\beta(s,t)$. For intuition, consider that the ``best'' solution for
\eqref{penest} for any given error will be the smoothest surface
(i.e., the one with the smallest penalty term) in the set of surfaces
that can be generated by adding functions from $\kernel(K^X)$ to any initial
$\beta(s,t)$ with the given error.
This may be quite close or quite far from the true $\beta(s,t)$, depending on
the specific setting, with more noisy data and more complex true shapes more likely
to result in fits that are quite far from the truth and still producing good
model fit.

\paragraph{Estimation performance for $\beta(s,t)$}
\begin{figure}[!ht] {\centering}
\begin{knitrout}\tiny
\definecolor{shadecolor}{rgb}{0.969, 0.969, 0.969}\color{fgcolor}

{\centering \includegraphics[width=.95\textwidth]{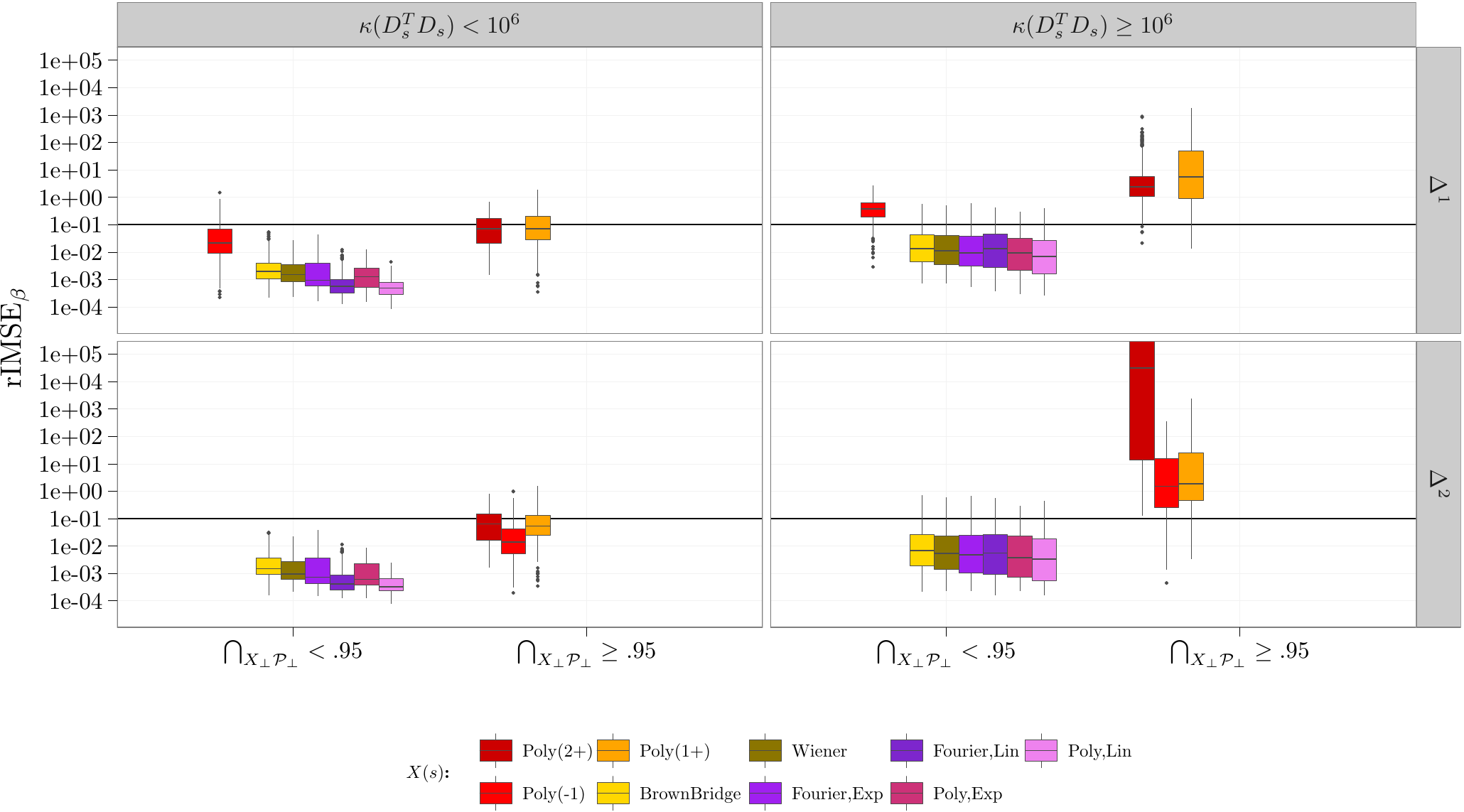} 

}

\end{knitrout}
\caption{\label{fig:mseBeta} Boxplots for relative integrated mean square
error  rIMSE$_\beta$ for all 18000 results
for SNR$_\epsilon=10$ with conventional difference penalties.
Columns show results for full rank $\mD_s$ versus numerically rank deficient
$\mD_s$. Rows show results for the first and second order difference penalties.
Boxplots grouped by overlap between $\kernel(K^X)$ and $\mathcal{P}_{s\bot}$ as
computed by $\bigcap_{X_\bot \mathcal{P}_{s\bot}}$, color-coded for the
different processes the $X(s)$ are sampled from.
Vertical axis on $\log_{10}$-scale, extreme errors $> 10^5$ for \emph{Poly(2+)}
are cut off.}
\end{figure}
Figure~\ref{fig:mseBeta} shows the estimation errors for coefficient surfaces
generated with SNR$_\epsilon=10$.
The right column shows results for numerically rank deficient $\mD_s$
(i.e., condition number $\kappa(\bm D_s\tr\bm D_s) \geq 10^6$),
the left column for designs with $\kappa(\bm D_s\tr\bm D_s)<10^6$.
The top row shows results for first differences penalty, bottom row for
second differences penalty.
Boxplots are grouped by the amount of overlap between $\kernel(K^X)$ and $\mathcal{P}_{s\bot}$ as  computed by $\bigcap_{X_\bot \mathcal{P}_{s\bot}}$ (see \eqref{eq:dxp}), color-coded for the different processes the $X(s)$-trajectories are sampled from.
Results for SNR$_\epsilon=2$ and $10^3$ were qualitatively very similar -- errors obviously become larger for noisier data
but the pattern shown in Figure~\ref{fig:mseBeta} remains the same.
Note that relative estimation errors below $\approx 0.01$ correspond to
estimates that are visually indistinguishable from the true surfaces,
and that errors below $\approx 0.1$ (thick black horizontal line)
usually preserve most essential features of the true $\beta(s,t)$ well. Results
with rIMSE$_\beta>1$ bear little resemblance to the ``true'' function.
Also recall that all of these fits, with rIMSE$_\beta$ values varying by more
than 14 magnitudes,
resulted in a comparatively small range of rIMSE$_Y$ between
$10^{\ensuremath{-5}}$ and
$10^{\ensuremath{-3}}$.
Closer inspection of results shows that the extremely
large errors for \emph{Poly(1+), Poly(-1)} and \emph{Poly(2+)} are caused by the expected
behavior: estimates are shifted by functions from the overlap of $\kernel(K^X)$
and $\mathcal{P}_{s\bot}$. The top row of Figure~\ref{fig:sim:identExample} shows
an example of this behavior: the estimate for first order difference penalty is
shifted by a large constant, while the estimate for the second order
difference penalty is shifted by both a constant and a huge linear trend in
$s$-direction. The fitted values of all models shown in Figure
\ref{fig:sim:identExample} are practically identical.

These results mostly corroborate the results derived in Section
\ref{sec:issue} -- we see that:
\begin{itemize}
\item Serious errors rIMSE$_\beta>0.1$ are rare for both full-rank and
rank-deficient $\mD_s$ if the generating process for $X(s)$ is not antagonistic in the sense that $\kernel(K^X) \cap
\mathcal{P}_{s\bot} = \{0\}$, i.e. for the \emph{Poly}, \emph{Fourier},
\emph{Wiener} and \emph{BrownBridge} processes.
\item As long as $\kernel(K^X) \cap \mathcal{P}_{s\bot} = \{0\}$ (approximated numerically by the criterion that $\bigcap_{X_\bot \mathcal{P}_{s\bot}} < 0.95$), regularization
of the estimated coefficient surface allows us to achieve good estimates even
if the unpenalized regression model \emph{per se} would not be identifiable due to rank deficiency of $\mD_s$.
\item The larger $\mathcal{P}_{s\bot}$ (top to bottom), and the closer
$\kernel(K^X)$ is to $\mathcal{P}_{s\bot}$ (left to right in each group of boxplots),
the larger the likelihood of estimates far from the truth and the larger the average rIMSE$_\beta$.
\end{itemize}

\paragraph{Diagnostics}\label{sec:sim:diagnostics}
As in Figure \ref{fig:mseBeta}, we use $\bigcap_{X_\bot \mathcal{P}_{s\bot}}$ (see~\eqref{eq:dxp})
as a measure of the overlap between $\kernel(K^X)$ and $\mathcal{P}_{s\bot}$. We consider a replicate to be  ``flagged'' as problematic if both $\bigcap_{X_\bot \mathcal{P}_{s\bot}}\geq 0.95$ and $\kappa(\mD_s^T, \mD_s)>10^6$.
\begin{figure}
\begin{knitrout}\tiny
\definecolor{shadecolor}{rgb}{0.969, 0.969, 0.969}\color{fgcolor}

{\centering \includegraphics[width=.6\textwidth]{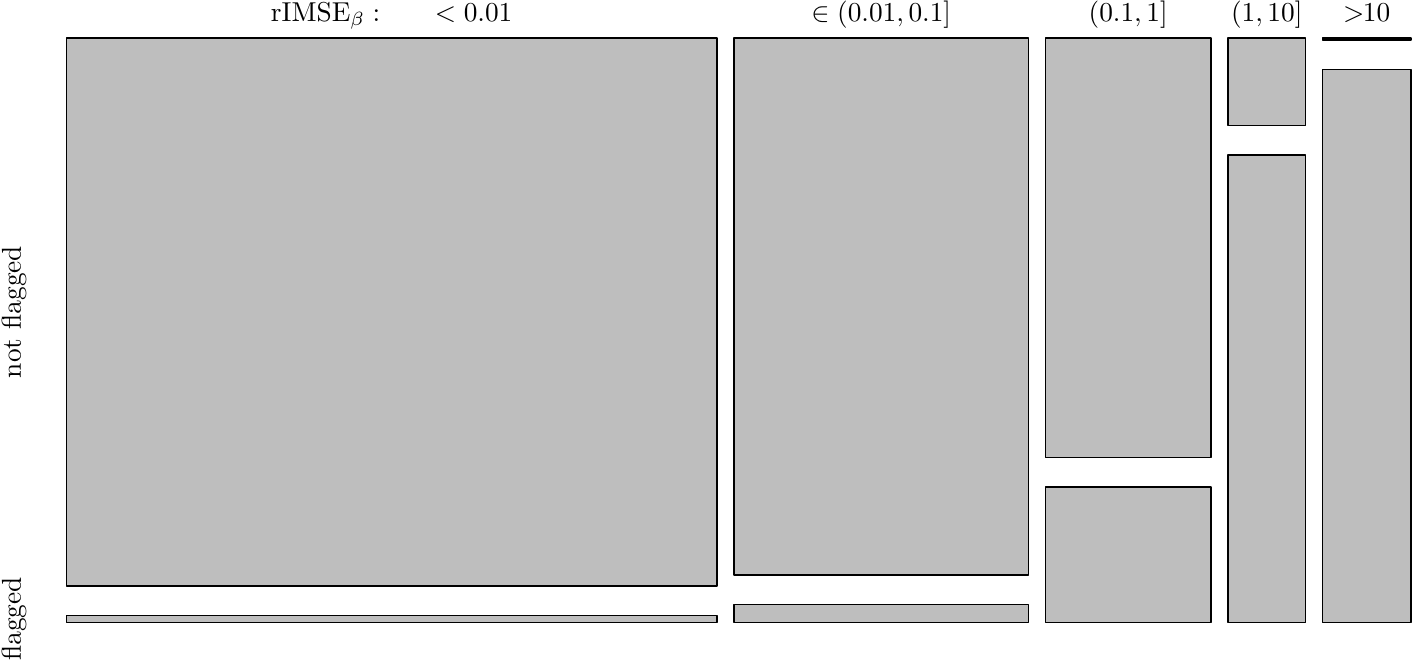} 

}

\end{knitrout}
\caption{\label{fig:flagTable} Mosaic plot for contingency table of "flagged" replicates and and categorized rIMSE$_\beta$.}
\end{figure}

Figure~\ref{fig:flagTable} shows a mosaic plot of the contingency table of ``flagged`'' replicates and categorized rIMSE$_\beta$.
While the sensitivity for identifying replicates with rIMSE$_\beta>1$ is
0.92, the specificity is
only 0.28.
The sensitivity for identifying replicates with rIMSE$_\beta>0.1$ is
0.58, the specificity is
0.09.
These fairly low specificities indicate that the penalized approach to function-on-function regression discussed here can outperform theoretical expectations and frequently finds good solutions even in very difficult settings.
Total accuracy for identifying settings with rIMSE$_\beta>0.1$ is
0.88 and
0.94 for
rIMSE$_\beta>1$.
The positive predictive value (precision) of the criterion for rIMSE$_\beta>1$ is
0.72,
while it is 0.91 for
rIMSE$_\beta>0.1$.
The negative predictive value of the criterion for rIMSE$_\beta>1$ is
0.99,
while it is 0.87 for
rIMSE$_\beta>0.1$.
Also note that certainly not all errors in the $(0.1, 1]$-range are due to
identifiability issues, so not all of the (rare) non-detections are failures of
the criterion.
\begin{figure}[htbp]
\begin{center}

\begin{knitrout}\tiny
\definecolor{shadecolor}{rgb}{0.969, 0.969, 0.969}\color{fgcolor}

{\centering \includegraphics[width=.95\textwidth]{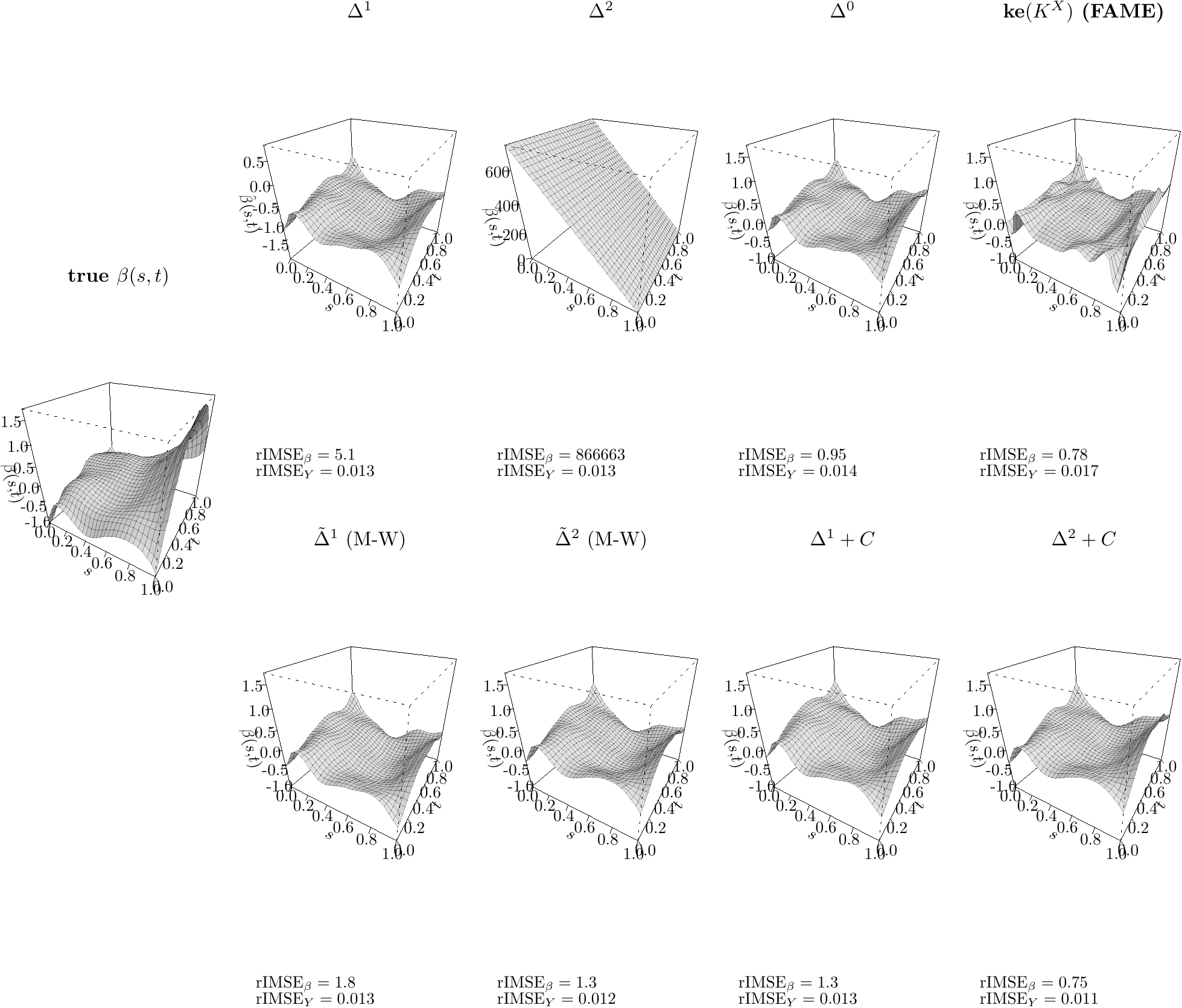} 

}

\end{knitrout}
\caption{Example estimates for different penalties in a very difficult setting with
$X(s)$ from \emph{Poly(2+)}, $M=5$, $K_s = 12$, $\snr_\eps = 2$.
Center left panel: true coefficient surface.
Top row, left to right: first order difference penalty, second order difference penalty, ridge penalty, FAME penalty.
Bottom row, left to right: full-rank first and second order difference penalties,
first and second order difference penalties with constraints.
\emph{Note the different $z$-axis scales in the two left panels of the top row.}
Subtitles give rIMSE$_\beta$ and rIMSE$_Y$ for each fit. }
\label{fig:sim:identExample}
\end{center}
\end{figure}

\paragraph{Performance of modified penalties}\label{sec:sim:modpen}
We broadened the scope of our simulation study by additionally comparing the performance of the non-standard penalties and constraints introduced in Section~\ref{sec:issue:diag}:
\begin{itemize}
\item a full-rank ridge penalty (``$\Delta^0$''),
\item the modified full-rank roughness penalties as suggested by \citet{MarraWood2011}; in our case we used both full-rank first order differences penalties (``$\tilde\Delta^1$'') and full-rank second order differences penalties (``$\tilde\Delta^2$''),
\item the FPC-based penalty of \citet{James2005} (``ke$(K^X)$ (FAME)''). We replace $\hat\nu_m$ by $\max(\hat\nu_m, 10^{-10}\hat\nu_1)$ in order to remove any (numerically) zero or negative eigenvalues,
\item and conventional first and second order differences penalties with additional constraints (``$\Delta^1+C$``, ``$\Delta^2+C$``) that enforce orthogonality of the estimated coefficient surface to functions in $\kernel(K^X) \cap \mathcal{P}_{s\bot}$ if the data are flagged as problematic by our diagnostic criterion ($\kappa(D_s^T D_s)\geq 10^6$ and $\bigcap_{X_\bot \mathcal{P}_{s\bot}}>.95$). Note that $\Delta^1+C$ and $\Delta^2+C$ reduce to $\Delta^1$ and $\Delta^2$, respectively, if $\kernel(K^X) \cap \mathcal{P}_{s\bot} = \emptyset$ and need not be compared to the other techniques in that case.
\end{itemize}

Figure~\ref{fig:mseBeta2} shows the rIMSE$_\beta$ for  $\snr_\eps = 10$ for the different $X(s)$-processes and penalties. Note that, in contrast to Figure~\ref{fig:mseBeta}, colors now represent the different penalties, not the different $X(s)$-processes. Boxplots for $\Delta^1$ and $\Delta^2$ contain the same results as those shown in Figure~\ref{fig:mseBeta}.

The full-rank difference penalties $\tilde\Delta^1$ and $\tilde\Delta^2$ (cyan and turquoise boxes) seem to drastically reduce the size and likelihood of severe estimation errors in the difficult settings, especially compared to $\Delta^2$ (dark blue). We occasionally incur slightly worse estimates for the easier settings, but these differences are small and hardly relevant here. However, we have found that neither $\tilde\Delta^1$ nor $\tilde\Delta^2$ with $\epsilon = .1$ are effective in removing non-identifiability artifacts if the estimated smoothing parameter is very small, i.e., if the effect of the penalty on the fit is weak. This explains the large outliers observed for some replicates for the antagonistic $X$-processes (top row) for $\tilde\Delta^1$ and $\tilde\Delta^2$. Figure \ref{fig:cca-c6-plot} (bottom row, leftmost two panels) shows this for the \code{dti} data set, where the estimated smoothing parameters are small. Note that no such outliers occur for the ridge ($\Delta^0$) and FAME penalties (green boxes). However, both penalties are not competitive for numerically rank deficient $\mD_s$ for the non-pathological $X(s)$-processes in the bottom two rows and perform worse than the (modified) difference penalties in all other settings, and so cannot be recommended for general use as well. Furthermore, in the simulated settings we use, the true
$\beta(s,t)$ are centered around 0, which reduces the negative effect of these penalties' bias towards small absolute values of $\beta(s,t)$.

Figure~\ref{fig:mseBeta3} shows results only for simulated data sets flagged as problematic by our diagnostic criterion for $\Delta^1$ or $\Delta^2$. In such settings, it makes sense to use difference penalties combined with "orthogonal-to-null-space-overlap" constraints on the fitted surface, these results are denoted by $\Delta^1+C$ or $\Delta^2+C$ (violet and purple), respectively.
\begin{figure}[htbp]
\begin{center}
\begin{knitrout}\tiny
\definecolor{shadecolor}{rgb}{0.969, 0.969, 0.969}\color{fgcolor}

{\centering \includegraphics[width=.95\textwidth]{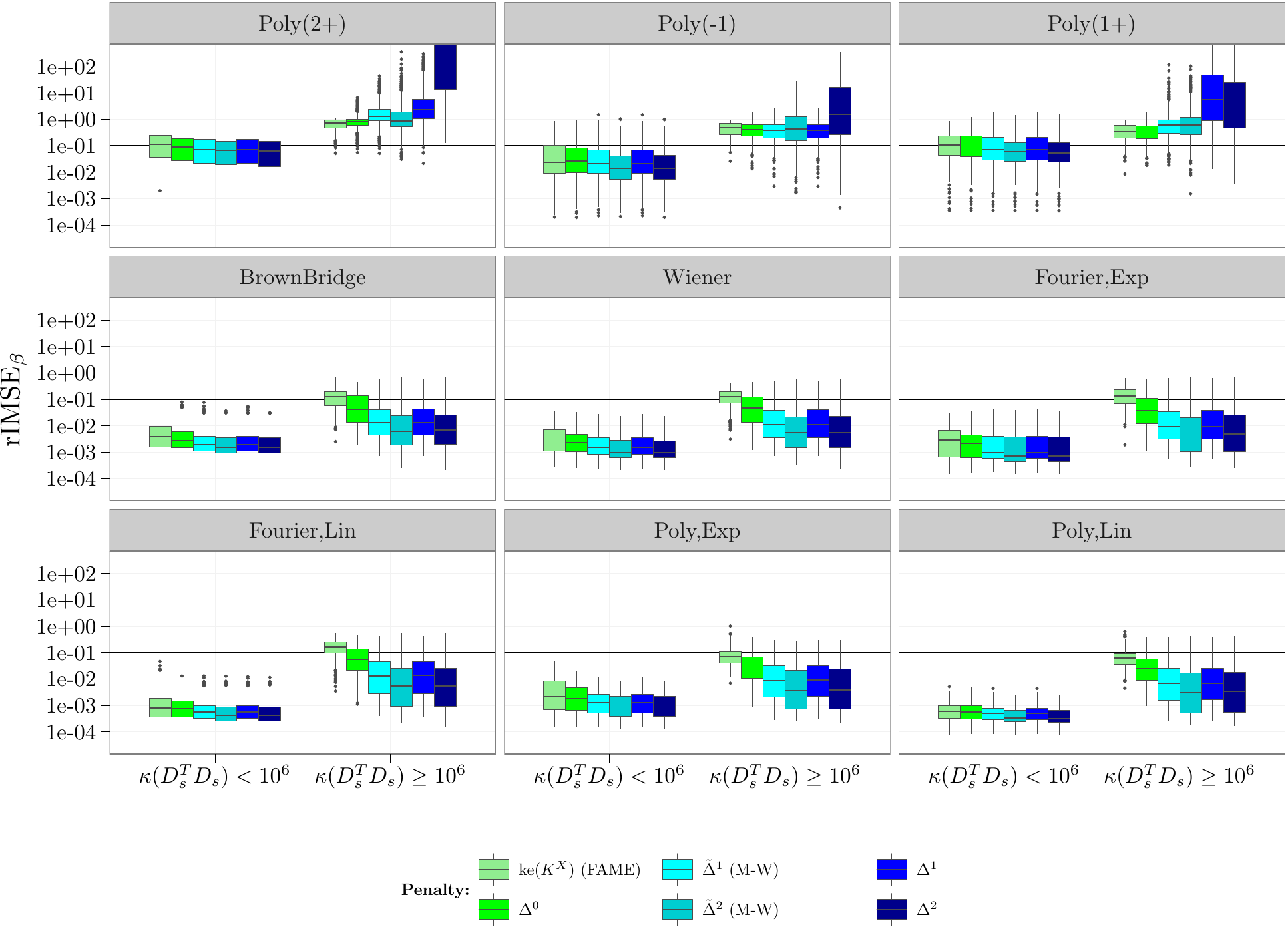} 

}

\end{knitrout}
\caption{rIMSE$_\beta$ for  $\snr_\eps = 10$ for the 8 $X(s)$-processes (panels) and the 8 different penalties (color).
Separate boxplots in each panel for settings with numerically rank deficient $\mD_s$ ($\kappa(\mD_s^T\mD_s)\geq 10^6$) versus settings with full rank $\mD_s$.}
\label{fig:mseBeta2}
\end{center}
\end{figure}

\begin{figure}[htbp]
\begin{center}
\begin{knitrout}\tiny
\definecolor{shadecolor}{rgb}{0.969, 0.969, 0.969}\color{fgcolor}

{\centering \includegraphics[width=.95\textwidth]{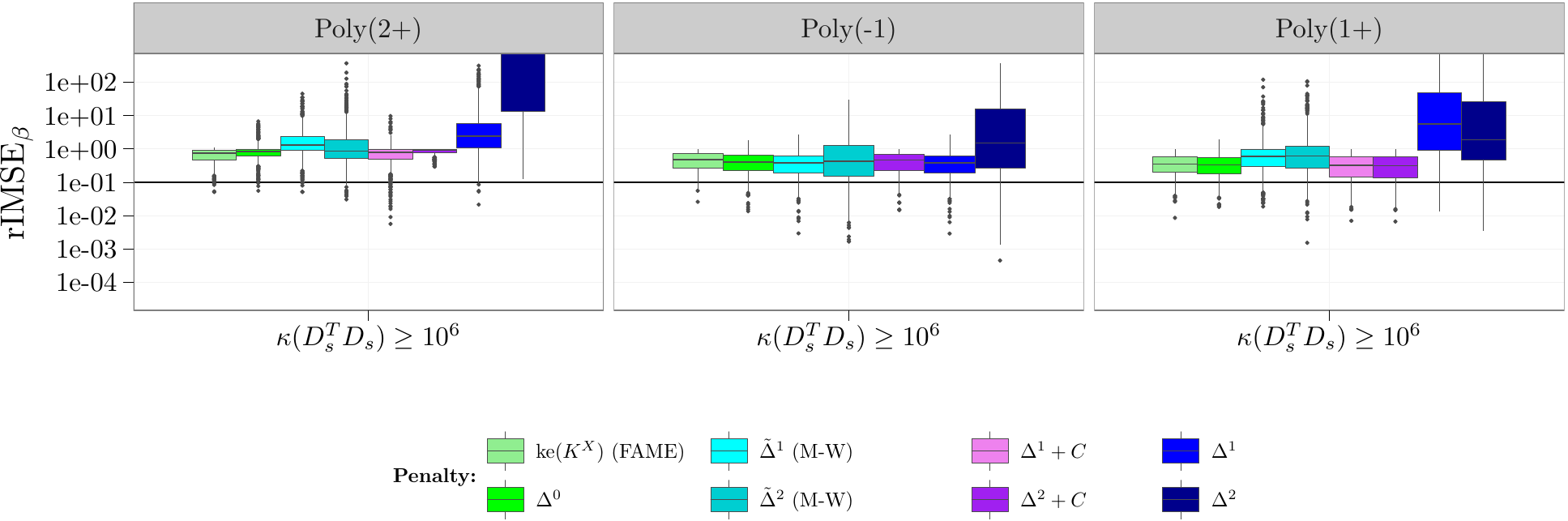} 

}

\end{knitrout}

\caption{rIMSE$_\beta$ for  $\snr_\eps = 10$ only for ``pathological'' data-sets. $X(s)$-processes in panels, penalties coded by color.}
\label{fig:mseBeta3}
\end{center}
\end{figure}

Figure~\ref{fig:sim:identExample} shows illustrative exemplary fits for the modified penalties and penalties with constraints in a data setting with $\kappa(\mD_s^T \mD_s)\geq 10^6$.
in the two rightmost panel of the top row and the panels in the bottom row.
While rIMSE$_Y$ is very similar for all 8 penalties except FAME (top right), the shapes of the corresponding coefficient surface estimates differ from each other in the expected fashion:
The ridge penalty's bias towards small $|\beta(s,t)|$ and tendency to under-smooth the estimated surface is visible, as is the typical wiggliness of FAME fits which are also not roughness penalized in the conventional sense. 
Compared to the result for $\Delta^1$, the fit for $\tilde\Delta^1$ is much closer to the level of the true $\beta(s,t)$ and adds a much smaller (spurious) constant to the fit.
Similar remarks apply for the comparison between $\Delta^2$ and $\tilde\Delta^2$: The huge spurious linear trend in $s$ is almost completely removed. The overlap criterion $\bigcap_{X_\bot \mathcal{P}_{s\bot}}$ is $\approx 1$ for $\Delta^1$, $\approx 2$ for $\Delta^2$ and 0 by construction for all others. Both constrained fits ($\Delta^1+C$, $\Delta^2+C$, two  bottom right panels) completely suppress the spurious constant (and trend) present in the unconstrained fits (two top right panels), and $\Delta^2+C$ yields a slightly smoother fit than $\Delta^1+C$ as expected.

Note that the true $\beta(s,t)$ are centered around 0 in our simulation. As all modified penalties and penalties with additional constraints result in estimated surfaces around 0, results will be shifted up- or downwards compared to the true surfaces if this is not the case in a real application. Such a shift reflects the fact that the average level of the surface cannot be inferred from the data in settings with corresponding persistent non-identifiability. Thus, in cases where our diagnostics indicate persistent non-identifiability, estimated coefficient surface values can only be interpreted relative to one another, but no interpretation of their sign is possible due to the estimated absolute level being essentially arbitrary. A corresponding implication regarding linear trends applies to the case of second order difference penalties.

\paragraph{Summary of simulation results}
We can draw the following conclusions based on the entirety of simulation
results:
\begin{itemize}
\item the potential for extreme estimation errors is large for $X(s)$-processes
with low effective rank whose $\kernel(K^X)$ is not disjunct from
$\mathcal{P}_{s\bot}$.
\item there is no strong positive correlation between accuracy of fitted values
(rIMSE$_Y$) and accuracy of the estimated coefficient surface (rIMSE$_\beta$)
for rank deficient $\mD_s$. Even extremely wrong
estimates of $\beta(s,t)$ can yield good model fits.
\item calculating $\bigcap_{X_\bot \mathcal{P}_{s\bot}}$ and $\kappa(\mD_s^T\mD_s)$ yields a suitable criterion that can diagnose persistent non-identifiability reliably, albeit with a substantial rate of "false alarms" in which conventional difference penalties work well despite an antagonistic data situation.
\item the full-rank roughness penalties often stabilize estimates in persistent non-identifiability settings without diminishing accuracy in other settings by a relevant amount, but not reliably so.
\item for problematic data settings flagged by the criterion developed here, the combination of difference penalties with suitable constraints on the coefficient surface ("$\Delta + C$"), ridge penalties ("$\Delta^0$") and penalties based on the spectrum of the functional covariate ("FAME") all perform similarly and outperform both full-rank and conventional difference penalties.
\end{itemize}

\section{Case Study}\label{sec:example}
\begin{figure}[!ht] {\centering}
% fig created in rev1/example.Rmd
\includegraphics[width=\textwidth]{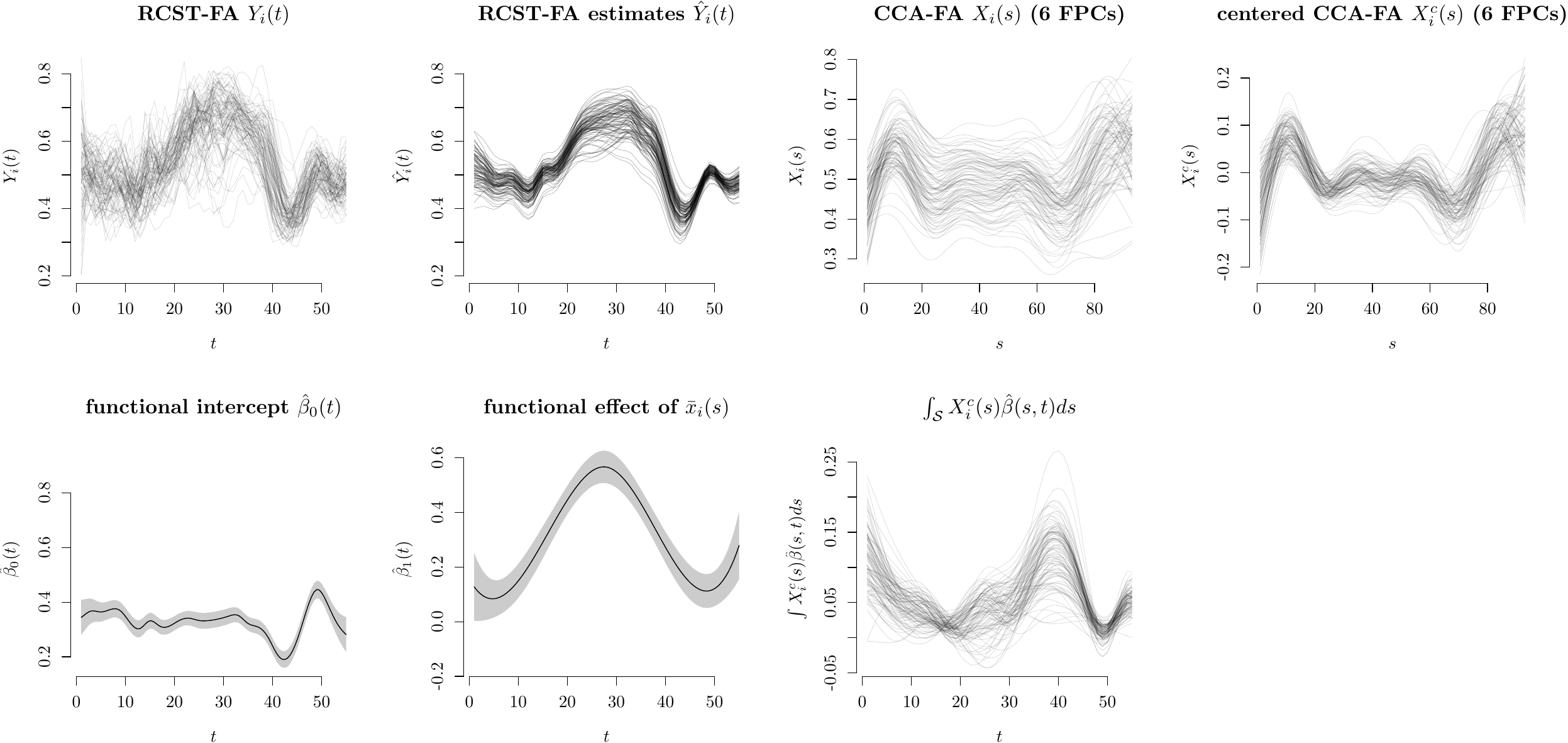}
\caption{\label{fig:cca-c6-data-fits} Top row, left to right: Observed RCST-FA $Y_i(t)$;
estimated RCST-FA $\hat Y_i(t)$ for the model described in this Section; presmoothed CCA-FA based on 6 largest FPCs as used in the example shown in Figure \ref{fig:cca_6_plot};
presmoothed, curve-wise centered CCA-FA based on 6 largest FPCs as used in the example below. Bottom row, left to right: Estimated functional intercept $\hat\beta_0(t)$ and 
estimated functional effect of mean CCA-FA $\hat\beta_1(t)$, with approximate point-wise 95\% confidence intervals; estimated contributions of the functional effect $\int_{\mathcal S} X^c_i(s) \hat\beta(s,t) ds$ to the additive predictor.}
\end{figure}

\begin{figure}[!ht] {\centering}
% fig created in rev1/example.Rmd
\includegraphics[width=\textwidth]{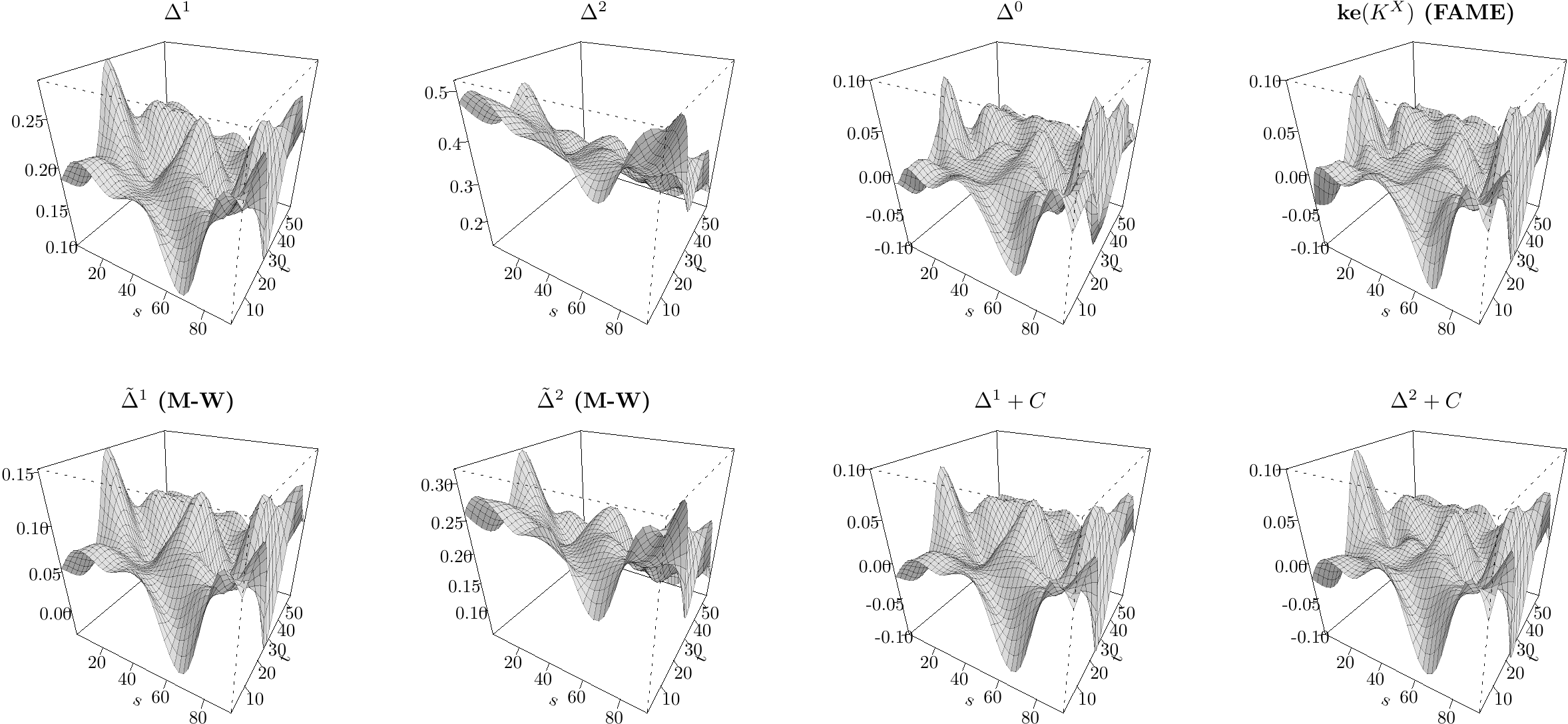}
\caption{\label{fig:cca-c6-plot} Estimated coefficient surfaces $\hat\beta(s,t)$ for regressing RCST-FA on mean CCA-FA and centered CCA-FA truncated to its first 6 empirical FPCs. All 8 fits lead to practically identical fitted values. Top row from left: First order ($\Delta^1$), second order difference penalty ($\Delta^2$), ridge penalty ($\Delta^0$), FPC-kernel-based penalty (FAME). Bottom row from left: Full rank first order ($\tilde\Delta^1$ (M-W)), second order difference penalty ($\tilde\Delta^2$ (M-W)), first order with constraint on kernel overlap ($\Delta^1 + C$), second order with constraint on kernel overlap ($\Delta^2 + C$). All fits performed with \code{pffr()}. \emph{Note that z-axis limits for the 4 panels on the left differ from each other and from those of the 4 panels on the right.}}
\end{figure}

To illustrate the practical relevance of the theoretical development given in Section \ref{sec:issue} and the performance of the countermeasures developed therein, we describe a deliberately constructed but realistic setting in which non-identifiability persists despite penalization. We use a subset of 100 patients from the \code{dti}-data described in the introduction and fit a realistic model in which investigators try to separate the effect of mean CCA-FA levels on RCST-FA from that of the shape of CCA-FA along the tract.
Such a model can be defined as
$$
Y_i(t) = \beta_0(t) + \bar x_i \beta_1(t) + \int_\mathcal{S} X^c_i(s) \beta(s,t) ds + \epsilon_{it}; \qquad \epsilon_{it} \stackrel{\iid}{\sim} N(0, \sigma^2),
$$
where $Y_i(t)$ is RCST-FA for patient $i$, $\beta_0(t)$ is a global functional intercept, $\bar x_i = \overline{X_i(s)}$ are the mean
CCA-FA levels with associated effect $\beta_1(t)$, and $X^c_i(s) \approx X_i(s) - \bar x_i$ are the denoised, curve-wise centered CCA-FA curves. In this example, we use a reconstruction $X^c_i(s)$ based on the six largest empirical FPCs (ca. 90 \% of total variance explained) to denoise CCA-FA and interpolate missing values. Note that curve-specific mean centering causes all constant functions to be in the kernel of the covariance of $X^c_i(t)$, i.e. $\bigcap_{X_\bot \mathcal{P}_{s\bot}} \geq .95$ for first and second order difference penalties. All fits shown below use $K_s=K_t=12$ marginal basis functions, so by construction we expect $\kappa(\mD_s^T\mD_s)$ to be large since the rank of the functional covariate is at most $M=6$. More than 6 marginal basis functions are required, however, so that the basis for $\beta(s,t)$ is flexible enough for approximating the rather complex shape we observe here. Note that curve-wise centering is also a necessary pre-processing step for many spectroscopic data analyses where absorption spectra are used as functional covariates and different mean intensity levels of such spectra are often pure laboratory artifacts \citep{Fuchs:etal:2014}, and more generally in settings such as this one where practitioners try to separate effects of the mean level of a functional covariate from those of its shape.

From left to right, the top row of Figure~\ref{fig:cca-c6-data-fits} shows the responses and fitted values for this model, the uncentered functional covariates used for the models shown in Figure~\ref{fig:cca_6_plot}, and the centered functional covariates used for the models in this section. For uncentered covariates, this setting provides an example of \emph{simple} non-identifiability, while centered covariates induce \emph{persistent} non-identifiability for difference penalties.
Figure~\ref{fig:cca-c6-plot} shows the estimated coefficient surfaces $\hat\beta(s,t)$ for this model for various penalties. Despite the divergent shapes of the different coefficient surface estimates, all 8 fits lead to practically identical fitted values on the training data (100 patients) and practically identical predictions for the test set (155 patients), with integrated MSE $\approx 0.0031$ and predictive integrated MSE $\approx 0.0046$ for each model. Both $\hat\beta_0(t)$ and $\hat\beta_1(t)$ are also practically identical for all 8 models, they are shown in the bottom row of Figure~\ref{fig:cca-c6-data-fits} along with the estimated contributions $\int_{\mathcal S} X_i^c(s) \hat\beta(s,t)ds$ of the functional covariate to the additive predictor.

The model specification is appropriately flagged as causing \emph{persistent} non-identifiability (i.e., $\bigcap_{X_\bot \mathcal{P}_{s\bot}} = .98\,[1.15]$ and $\kappa(\mD_s^T\mD_s) \rightarrow \infty$) for first [second] order difference penalties $\Delta^1$ [$\Delta^2$]. It is instructive to compare the resulting estimates for $\beta(s,t)$ for different penalty specifications: the first difference penalty fit (top, left) includes a constant offset from 0, while the second difference penalty fit (top, second from left) includes different offsets from 0 for each $t$, in a linear decrease that is unpenalized by the marginal second order difference penalty in $t$ direction, as well as a linear trend in $s$. The presence of this spurious linear trend is surprising since we did not explicitly remove linear components from the functional covariate and it is not present in the fits for the uncentered data (c.f. Figure~\ref{fig:cca_6_plot}). Further investigation revealed that the centered functional covariate's null-space contains a fairly strong linear component ($\bigcap_{LV}((\bm{X}^c)\tr_\bot, \mw \cdot \bm s) = 0.77$, where $(\bm{X}^c)\tr_\bot$ denotes the orthogonal complement of the observed $X^c(s)$), which is overlap enough to produce spurious linear shifts in this example. Note that the linear component is much stronger in the non-centered covariates ($\bigcap_{LV}((\bm{X})\tr_\bot, \mw \cdot \bm s) = 0.11$) and its lack is caused by the curve-wise centering in this example. It is straightforward to show that this can occur if $\int X_i(s) s ds \approx \bar x_i \int s ds \;\forall i = 1,\dots,n$, as is the case here.

Both offset and trends are reduced, but not entirely removed, by instead using the full-rank difference penalties with $\epsilon = 0.1$ (bottom row, two leftmost panels). Due to the wiggliness of the coefficient surface, the estimated smoothing parameters are quite small, so there is only little penalization going on. Consequently, the additional small penalty on $\mathcal{P}_{s\bot}$ is not strong enough to eliminate non-identifiability artifacts from the fit in this case. Estimated coefficient surfaces with full-rank difference penalties with $\epsilon = 1$ (not shown), however, are very similar to the those in the four right-most panels of Figure~\ref{fig:cca-c6-plot}. That results for the full-rank difference penalties depend so strongly on this tuning parameter is a considerable disadvantage. 

The four right-most panels show results based on the ridge penalty ($\Delta^0$, second from right, top row), the penalty based on the FPCs of $\kernel(K^X)$ (FAME, top right), and the difference penalties with additional ``orthogonal-to-kernel-overlap''  constraints ($\Delta + C$, bottom row). It is reassuring to see that all four of these penalties lead to very similar results in this setting despite their different motivations and mathematical properties. Admittedly, understanding the estimated coefficient surfaces in terms of the supposed data generating process remains difficult. First, because of their complex shape and secondly (and more pertinently to the main points of this paper), because the restriction to surfaces centered around 0 that is enforced for $\Delta + C$ and implied by the $\Delta^0$ and $\kernel(K^X)$ penalties precludes facile interpretation of, e.g., the sign of features of $\beta(s,t)$. Since the ``true'' offset of the coefficient surface is not estimable from the data, negativity or positivity of certain peaks or troughs of $\hat\beta(s,t)$ is not directly interpretable and interpretation can thus only rely on relative heights across the surface. Higher dimensional overlaps between penalty and covariate null-spaces than the one encountered here will compound these expositional difficulties.

This example demonstrates how reasonable, but unfortunate combinations of data pre-processing and model specifications can lead to simply as well as persistently non-identifiable models despite penalization. The diagnostics and countermeasures developed in Section \ref{sec:issue}, however, seem to be suitable for detecting and remedying such problems in a real data setting.

\section{Conclusion and Discussion}
Coefficient surface estimates in spline-based function-on-function-regression \eqref{eq:model} can suffer from persistent identifiability problems if the span of the marginal basis for the coefficient surface over a functional covariate's domain overlaps the kernel of its covariance operator. 
A rank deficient  design matrix can occur in  particular if the functional covariate's covariance operator is of effective rank smaller than the number of marginal basis functions - %, or if the overlap between the spans of basis functions and eigenfunctions of the covariance is smaller than the number of basis functions.  The former can occur 
either because the number of eigenfunctions with non-zero eigenvalues is truly below the number of marginal basis functions, or if eigenvalues of the covariance operator decrease too rapidly compared to the noise level of the data.

In practice, spline based approaches are typically fitted with a regularization penalty corresponding to a smoothness assumption on the coefficient surface. 
We have shown that identifiability problems persist if, and only if, in addition to a rank deficiency of the design matrix, the kernel of the functional predictor's covariance $\kernel(K^X)$ overlaps the function space $\mathcal P_{s\bot}$ spanned by parameter vectors in the null-space of the spline's roughness penalty. 
In the case of no overlap, there is a unique smoothest representative on any hyperplane of parameter vectors leading to the same additive predictor and the penalized estimation problem finds a unique smoothest solution.
Similar results hold for the simpler case of penalized scalar-on-function regression models \citep{Happ2013}. They are also expected to hold for more general loss functions than the quadratic loss analysed here since generalized additive models are typically estimated by the penalized iteratively re-weighted least squares (P-IWLS) method, where the system of equations solved in each step is identical to that of \eqref{penest} except for the introduction of a vector of IWLS weights and the substitution of $\bm y$ by IWLS working responses that are element-wise linear transformations of $\bm y$.

A lack of identifiability also implies a lack of correlation between accuracy of the coefficient estimates and goodness of fit for the responses. 
As this extends to prediction errors for out-of-sample data from the same process, it is usually not possible to detect  identifiability issues for a given data set based on subsampling or cross-validation schemes. 
Instead, based on theoretical considerations and simulation results, we have identified two easily computable diagnostic criteria in order to detect non-identifiable model specifications before estimation. 
The criteria combine  the condition number of a partial design matrix with a measure of the amount of overlap between the kernel of the functional predictor's covariance and the null-space of the penalty. Non-identifiability may in particular be an issue if both criteria are indicative of a problematic setting, or if the partial design matrix is numerically rank deficient and the penalty smoothing parameter is estimated to be close to zero. 
 If a non-identifiable model specification is discovered, we recommend that practitioners choose  modified full-rank roughness penalties to safeguard against
spurious estimates or estimate coefficient surfaces under constraints that force these spurious components to zero. The \texttt{pffr}-function in the refund package uses a first order differences penalty by default, incorporates the diagnostic checks developed and evaluated in this work, and issues corresponding user warnings and enforces suitable constraints if a persistently non-identifiable model specification is detected.

Another practical consequence of our results is that pre-processing methods for functional covariates should be avoided if they reduce their effective rank (such as pre-smoothing with low-dimensional bases) or if they increase the amount of overlap between  $\kernel(K^X)$ and $\mathcal P_{s\bot}$ (such as curve-wise centering). 

Jointly, these provisions seem to be sufficient to diagnose and safeguard against most serious artifacts of non-identifiability in practice. Our results  indicate that  in many cases,  penalization  allows the reasonable estimation of  coefficient surfaces that are not identifiable in the theoretical model under an additional smoothness assumption, avoiding instead the common assumption that the  estimated coefficient surface lies in the span of the first few eigenfunctions of the covariance operator of the covariate. 

This work drives home the point that we cannot hope to reliably estimate arbitrarily complex effect shapes from functional covariates with low information content. In that sense, assuming smoothness of the coefficient surface and constraining its non-identifiable components to be zero is simply following a principle of parsimony. 
At the same time, substantial interpretation of coefficient surface estimates derived from rank-deficient designs is difficult and has the potential to be very misleading.
Functional principal component regression approaches do not suffer from the potential identifiability issues discussed here, but they do so at the price of restricting the estimated coefficient surface to the span of the estimated functional principal components.
These are significant challenges for the maturing field of functional regression methods, at least in applications where these methods are used not only for prediction, but also for inferring and understanding the underlying data generating processes. It is our hope that the theoretical development and practical examples presented here can serve as a starting point for critical reflection on this important issue.

\section*{Acknowledgements}
The authors wish to thank Ciprian Crainiceanu, Ludwig Fahrmeir,  Jeff
Goldsmith and Clara Happ for helpful discussions and encouragement.
Comments by an anonymous reviewer spurred substantial improvements of our methodology and presentation.
Financial support by the German Research Foundation through the Emmy Noether
Programme, grant GR 3793/1-1 is gratefully acknowledged. 

{
\small

}

\newpage
\appendix

\section{Proofs}
\subsection{Proof of Proposition \ref{sonjaproposition1}}\label{sec:proof-sonjaproposition1}
Assume that $\mB_t$ is of full rank $K_t$. Then, the design matrix $\mD = \mB_t \otimes (\mX \mW
\mB_s)$ in model \eqref{idmodel} is
%The rank of $\mXi \mPhi \mW \mB_s$ is equal to $$\rg(\mXi \mPhi \mW \mB_s)= \rg(\mPhi \mW \mB_s).$$ It is
rank-deficient if and only if
\begin{enumerate}
 \item[a)] $M < K_s$ or
 \item[b)] if $M \geq K_s$, but $\rg(\mPhi \mW \mB_s) < K_s$ .%, where $\mPhi = \left[\phi_{m}(s_j)\right]\inds{m=1, \dots, M}{j=1, \dots, S}$.
\end{enumerate}
\begin{proof}
As $\rg(\mX)=\rg(\mXi \mPhi)=\rg(\mPhi)=M$,
\begin{align*}
\rg(\mPhi \mW \mB_s) &\geq \rg(\mXi \mPhi \mW \mB_s) = \rg(\mX \mW \mB_s) \\
	&\geq \rg(\mXi \mPhi) + \rg(\mPhi \mW \mB_s) - \rg(\mPhi)
	= \rg(\mPhi \mW \mB_s)
\end{align*}
using \citet[][Th.~17.5.1]{Harville1997}.
Thus,
$\rg(\mX \mW \mB_s) = \rg(\mPhi \mW \mB_s)$. The rank will be less than
full if $\rg(\mPhi \mW \mB_s)<K_s$, including if $M<K_s$, as $\rg(\mPhi \mW
\mB_s) \leq M$ by construction.
\end{proof}

\subsection{Proof of Proposition \ref{unique}}\label{sec:proof-unique}
 Let
 $\bP = \lambda_s (\bm{I}_{K_t} \otimes \bP_s) + \lambda_t (\bP_t \otimes \bm{I}_{K_s})$,
 with $\bP_s$ and $\bP_t$ positive semi-definite matrices.
 Assume that $\bm B_t$ is of full rank $K_t$, that
 $\rg(\mP_t) < K_t$ and that $\lambda_s >0, \lambda_t \geq 0$.
 Then, for any $\mf \in \image(\bm D)$ there is a unique minimum
$\min_{ \btheta \in \mathcal{H}_f }\{ \btheta\tr \bP \btheta\}$ if and only if $\kernel(\bm D_s\tr  \bm D_s ) \cap \kernel(\bP_s) = \{\bm 0\}$.

\begin{proof}
We have
 \begin{align*}
 & \min \{\btheta\tr \bP \btheta\} & \quad & \text{s.t.} \quad \btheta \in \mathcal{H}_f \\
=& \min\{ \btheta\tr \bU \bU\tr \bP \bU \bU\tr \btheta\} & \quad & \text{s.t.} \quad \bD \btheta = \mf \\
=& \min\{ (\btheta\tr \bU_+ | \btheta\tr \bU_0) \begin{pmatrix}
 \bU_+\tr \bP \bU_+ & \bU_+\tr \bP \bU_0 \\
 \bU_0\tr \bP \bU_+ & \bU_0\tr \bP \bU_0
 \end{pmatrix}
 \begin{pmatrix}
 \bU_+\tr \btheta \\
 \bU_0\tr \btheta
 \end{pmatrix}\}
 & \quad & \text{s.t.} \quad \bU_+\tr \btheta = \mSigma_+\1 \mV_+\tr \mf.
\end{align*}
 Denote $\mv_+ = \bU_+\tr \btheta$ and $\mv_0 = \bU_0\tr \btheta$, with $(\mv_+\tr, \mv_0\tr)\tr = \bU\tr \btheta$ a bijective re-parametrization of $\btheta$.
Note that
$\mv_+$ is fixed while $\mv_0$ is free to vary within the hyperplane.
Setting the derivative with respect to $\mv_0$ equal to zero yields
\begin{align}\label{eq:min_hf}
\bU_0\tr \bP \bU_0 \mv_0 = - \bU_0\tr \bP \bU_+ \mv_+.
\end{align}
Now, if
$\kernel(\mD_s\tr \mD_s ) \cap \kernel(\bP_s) \neq \{\bm 0\},$
choose $\bm u_s \neq \bm 0$ with
$\mU_{s0} \bm u_s \in \kernel(\mD_s\tr \mD_s ) \cap \kernel(\bP_s)$
and $\bm u_t \neq \bm 0$ with $\mU_t \bm u_t \in \kernel(\bP_t)$. As
$\mU_0 = \mU_t \otimes \mU_{s0}$, we have
\begin{align*}
(\bm u_t\tr \otimes \bm u_s\tr) \bU_0\tr \bP \bU_0 (\bm u_t \otimes \bm u_s) =
\lambda_s (\bm u_t\tr \bm u_t) \bm u_s\tr \mU_{s0}\tr \mP_s \mU_{s0} \bm u_s + \lambda_t \bm u_t\tr \mU_t\tr \mP_t \mU_t \bm u_t (\bm u_s\tr \bm u_s)
= 0.
\end{align*}
Thus, $\bU_0\tr \bP \bU_0$ is not of full rank, no unique solution $\mv_0$ of \eqref{eq:min_hf} exists, so there is no unique minimum $\min_{ \btheta \in \mathcal{H}_f }\{ \btheta\tr \bP \btheta\}$.

On the other hand, if
$\kernel(\mD_s\tr\mD_s ) \cap \kernel(\bP_s) = \{\bm 0\}$,
for any $\bm x$ with $\bm x\tr \mU_{s0}\tr \mP_s \mU_{s0} \bm x = 0$ we
have
$\mU_{s0} \bm x \in \kernel(\mD_s\tr\mD_s ) \cap \kernel(\bP_s) = \{\bm 0\}$
and thus $\bm x = \mU_{s0}\tr \mU_{s0} \bm x = \bm 0$. As this means that
$\mU_{s0}\tr \mP_s \mU_{s0}$ is positive definite,
we also have that
$\bU_0\tr \bP \bU_0 = \lambda_s \mI_{K_t} \otimes (\mU_{s0}\tr \mP_s \mU_{s0}) + \lambda_t (\mU_t\tr \mP_t \mU_t) \otimes \mI_{(K_s - d/K_t)}$ is positive definite and thus invertible. Therefore, there is a unique minimum $\mv_0 = - (\bU_0\tr \bP \bU_0)^{-1} \bU_0\tr \bP \bU_+ \mv_+$
and a unique smoothest point
\bea\label{thetac}
 \btheta_f &=& \bU (\mv_+\tr, - [(\bU_0\tr \bP \bU_0)^{-1} \bU_0\tr \bP \bU_+ \mv_+]\tr)\tr
 = \mH \bU_+ \mv_+ = \mH \btheta,
\eea
with $\mH = (\mI_{K_sK_t} - \bU_0 (\bU_0\tr \bP \bU_0)^{-1} \bU_0\tr\bP )$ % = \mI_{K_t} \otimes (\mI_{K_s} - \bU_{s0} (\bU_{s0}\tr \bP_s \bU_{s0})^{-1} \bU_{s0}\tr\bP_s)$,
 % on the hyperplane $\mathcal{H}_f$ with
 and $\btheta_f\tr \bP \btheta_f = \min_{\btheta \in \mathcal{H}_f} \btheta\tr \bP \btheta$. \end{proof}

\subsection{Proof of Proposition \ref{invert}}\label{sec:proof-invert}

Assume that $\bm B_t$ is of full rank $K_t$, that $\rg(\mP_t) < K_t$ and that $\lambda_s >0, \lambda_t \geq 0$. Then, there is a unique penalized least squares solution
for \eqref{penest} if and only if $\kernel(\mD_s\tr
\mD_s ) \cap \kernel(\bP_s) = \{\bm 0\}$.

\begin{proof}
Problem \eqref{penest} has a unique solution iff
$(\bD\tr \bD + \lambda_s (\bm{I}_{K_t} \otimes \bP_s) + \lambda_t (\bP_t \otimes \bm{I}_{K_s})) \geq 0$
is invertible, i.e., positive definite. Now, suppose that
$\kernel(\mD_s\tr \mD_s ) \cap \kernel(\bP_s) = \{\bm 0\}$.
For any $\bm x \in \real^{K_tK_s}$ with
\begin{eqnarray}\nn
\bm x\tr (\bD\tr \bD + \lambda_s (\bm{I}_{K_t} \otimes \bP_s) + \lambda_t (\bP_t \otimes \bm{I}_{K_s})) \bm x = 0,
\end{eqnarray}
we have, with $\mb = \mU\tr \bm x = (b_{kj})_{kj \in \{11, 12, \dots, K_tK_s\}}$,
\begin{align}\label{folge1}
\begin{split}
 \bm x\tr \bD\tr \bD \bm x &= \mb\tr \left(\mSigma_t^2 \otimes
\left(\begin{array}{cc}
 \mSigma_{s+}^2 & \bm 0 \\
 \bm 0 & \bm 0
 \end{array}\right)
 \right)\mb = 0  \\
 \Ra  \quad b_{kj} &= 0 \;\forall\; 1 \leq j \leq d/K_t;\; 1 \leq k \leq K_t,
 \end{split}
\end{align}
and also
\begin{eqnarray}\nn
 \mb\tr [\mI_{K_t} \otimes ((\mU_{s+} | \mU_{s0})\tr \bP_s (\mU_{s+} | \mU_{s0}) )] \mb
\stackrel{\eqref{folge1}}{=} \tilde{\mb}\tr [\mI_{K_t} \otimes ( \mU_{s0}\tr \bP_s \mU_{s0}) ] \tilde{\mb}
 = 0,
\end{eqnarray}
where $\tilde{\mb}$ is obtained by removing the zero entries given by \eqref{folge1} which correspond to $\mU_{s+}$ from $\mb$. Thus, for all $1 \leq k \leq K_t$,
and letting $\tilde{\mb}_k = (b_{k(d/K_t+1)}, \dots, b_{kK_s})$, we have
$\mU_{s0} \tilde{\mb}_k \in \kernel(\mD_s\tr \mD_s ) \cap \kernel(\bP_s) = \{\bm 0\}$. Thus, $\tilde{\mb} = \bm 0$, $\mb = \bm 0$, $\bm x = \bm 0$ and $(\bD\tr \bD + \lambda_s (\bm{I}_{K_t} \otimes \bP_s) + \lambda_t (\bP_t \otimes \bm{I}_{K_s}))$ is of full rank.

On the other hand, if
$\kernel(\mD_s\tr\mD_s ) \cap \kernel(\bP_s) \neq \{\bm 0\}$, there is a
$\bm u_s \neq \bm 0$ with $\mD_s \bm u_s = \bm 0$ and $\bP_s \bm u_s = \bm 0$.
Choose $\bm 0 \neq \bm u_t \in \kernel{\bP_t}$. Then,
\begin{eqnarray}\nonumber
(\bm u_t\tr \otimes \bm u_s\tr) (\bD\tr \bD + \lambda_s (\bm{I}_{K_t} \otimes \bP_s) + \lambda_t (\bP_t \otimes \bm{I}_{K_s})) (\bm u_t \otimes \bm u_s) \\\nn
= \bm u_t\tr \bm B_t\tr \bm B_t \bm u_t \cdot 0 + \lambda_s \bm u_t\tr \bm u_t \cdot 0 + \lambda_t \cdot 0 \cdot \bm u_s\tr \bm u_s = 0.
\end{eqnarray}
Thus,
$(\bD\tr \bD + \lambda_s (\bm{I}_{K_t} \otimes \bP_s) + \lambda_t (\bP_t \otimes \bm{I}_{K_s}))$
is singular and not invertible.
\end{proof}

\end{document}